\def\comment#1{}
\newtheorem{theorem}{\bf Proposition}
\newtheorem{definition}{\bf Definition}
\newcommand {\eqdef}{\stackrel{\triangle}{=}}
\newcommand{\A}{{\bf A}}
\newcommand{\W}{{\bf W}}
\newcommand{\X}{{\bf X}}
\newcommand{\Y}{{\bf Y}}
\newcommand{\bS}{{\bf S}}
\newcommand{\HH}{{\bf H}}
\newcommand{\y}{{\bf y}}
\newcommand{\x}{{\bf x}}
\newcommand{\s}{{\bf s}}
\newcommand{\C}{{\bf C}}
\newcommand {\bLambda} {\boldsymbol{\Lambda}}
\newcommand {\bXi} {\boldsymbol{\Xi}}
\begin{document}

\title{Performance Analysis of Source Image Estimators in Blind 
Source Separation}

\author{{\bf Zbyn\v{e}k Koldovsk\'{y} and Francesco Nesta}
\vspace{0.1in} \\
$^1$Faculty of Mechatronics, Informatics, and Interdisciplinary
Studies, Technical University of Liberec,\\ Studentsk\'a 2, 461 17
Liberec, Czech Republic. E-mail:
zbynek.koldovsky@tul.cz,\\ fax:+420-485-353112, tel:+420-485-353534\\
$^2$Conexant System, 1901 Main Street, Irvine, CA (USA). E-mail: 
francesco.nesta@conexant.com
}

\maketitle

\begin{abstract}
Blind methods often separate or identify signals or signal subspaces up to an 
unknown scaling factor. Sometimes it is necessary to cope with the scaling 
ambiguity, which can be done through reconstructing signals as they are 
received by sensors, because scales of the sensor responses (images) have known 
physical interpretations. 
In this paper, we analyze two approaches that are widely used for computing the 
sensor responses, especially, in 
Frequency-Domain Independent Component Analysis. One approach is the 
least-squares projection, while the other 
one assumes a regular mixing matrix and computes its inverse. Both estimators 
are invariant to the unknown scaling. Although frequently used, their 
differences were not studied yet.
A goal of this work is to fill this gap. The estimators are compared through a 
theoretical study, perturbation analysis and simulations.
We point to the fact that the estimators are equivalent when the 
separated signal subspaces are orthogonal, and vice versa. Two 
applications are shown, one of which demonstrates a case where the estimators 
yield substantially different results.

\end{abstract}

{\keywords Beamforming, Blind Source Separation, Independent Component Analysis, Principal Component Analysis, Independent Vector Analysis}

\section{Introduction}

The linear instantaneous complex-valued mixture model
\begin{equation}\label{model}
	{\bf x}={\bf H} {\bf s}\qquad\text{or}\qquad {\bf X}={\bf H} {\bf S}
\end{equation}
describes many situations where multichannel signals are observed, especially 
those considered in the field of array processing \cite{vantrees} and Blind 
Source Separation (BSS) \cite{BSS,cardoso}. 
The former equation is a vector-symbolic description of the model while the 
latter equation describes a batch of data.

The vector ${\bf x}=[x_1, \dots, 
x_d]^T$ represents $d$ observed signals on sensors, ${\bf s}=[s_1,\dots,s_r]^T$ 
represents original signals, and ${\bf H}$ is a $d\times r$ complex-valued 
mixing matrix representing the linear mixing system. Upper-case bold letters 
such as $\X$ and $\bS$ will denote matrices whose columns contain concrete 
samples of the respective signals; let the number of columns (samples) be $N$ 
where $N\gg d$.
We will focus on the regular case when the number of the observed signals $d$ 
is the same as that of the original signals $r$, but later in the article we 
will also address an undetermined case where $r>d$. From this point forward, 
let $\HH$ be a $d\times d$ full rank matrix. 

Consider a situation where only a subset of the original signals is of primary 
interest (e.g., only one particular source or the subspace spanned by some 
sources, so-called multidimensional source). 
Without a loss of generality, let $\s$ be divided into two components $[\s_1; \s_2]$ where the sub-vectors $\s_1$ and $\s_2$ have, respectively, length $m$ and $d-m$, $1\leq m < d$. The former component will be referred to as {\em target component}, and the latter as {\em interference}. Correspondingly, let $\HH$ be divided as $[\HH_1\, \HH_2]$ where the sub-matrices $\HH_1$ and $\HH_2$ have dimensions $d\times m$ and $d\times(d-m)$, respectively. 
Then (\ref{model}) can be written as
\begin{equation}\label{model2}
	{\bf x} = \HH_1\s_1 +  \HH_2\s_2.
\end{equation}
The terms $\HH_1\s_1$ and $\HH_2\s_2$ correspond to the contributions of $\s_1$ 
and $\s_2$, respectively, for the mixture ${\bf x}$, and will be denoted as 
${\bf s}^i\eqdef {\bf H}_i\,\s_i$, $i\in\{1,2\}$. If, for example, $\s_2$ is 
not active, then $\x=\s^1$, which is equal to the observations of $\s_1$ on the 
sensors, that is, the {\em sensor response} or {\em source image} of $\s_1$. 

In audio applications, $\s_1$ is often a scalar signal ($m=1$) representing a 
point source located in the room, and the model (\ref{model}) describes linear 
mixing in the frequency domain for a particular frequency bin 
\cite{duong,sawada,ukai,matsuoka}. For $m>1$, $\s_1$ can correspond to a 
subgroup of speakers \cite{schmulik}. In biomedical applications, 
$\s_1$ or $\s_2$ can consist of components related to a target activity such 
as muscular artifacts in electroencephalogram (EEG) \cite{eeg}, maternal or 
fetal electrocardiogram (ECG) \cite{ecglathauwer}, and so forth. 

The problem of retrieving ${\bf s}^i$ from $\x$ is often solved with the aid of 
methods for Blind Source Separation.
The objective of BSS is to separate the original signals based purely on their 
general properties (e.g., independence, sparsity or nonnegativity). In a 
general sense,
BSS involves Principal Component Analysis (PCA), Independent Component Analysis (ICA) \cite{BSS, ICA}, Independent Vector Analysis (IVA) \cite{iva1,iva2}, Nonnegative Matrix Factorization \cite{nmf}, etc. Some methods separate all of the one-dimensional components of $\s$ \cite{fastica,jade}, extract selected components only \cite{cichocki2004}, or separate multidimensional components; see, e.g., \cite{lahat2, isa, mica, lahat,tabcd}. 
The separation can also proceed in two steps where a steering vector/matrix 
(e.g., ${\bf H}_1$) is identified first, while the signals are separated in the 
second step using an array processor such as the minimum variance 
distortion-less (MVDR) beamformer \cite{capon}.

The blind separation or identification is often not unique. For example, 
the 
order and scaling factors of the separated components are random and cannot be 
determined without additional assumptions. 
Throughout this paper, we will always assume that the problem of the random 
order (known as the permutation problem) has already been resolved, so the 
subspaces in (\ref{model2}) are correctly identified; for practical methods 
solving the permutation problem, see, e.g., \cite{sawada,pairing,iva}. 

The reconstruction of the signal images is a way to cope with the scaling 
ambiguity \cite{matsuoka}. The advantage is that $\s^i$ can be retrieved 
without prior knowledge of the scale of $\s_i$ in (\ref{model}) or in 
(\ref{model2}). The scale of $\s^i$ has clear physical interpretation (e.g., 
voltage), so the retrieval is highly practical. For example, in the Frequency 
Domain ICA for audio source separation, the scaling ambiguity must 
be resolved within each frequency bin, which is important for reconstructing 
the spectra of the separated signals in the time domain \cite{sawada}. Some 
recent BSS 
methods aim to consider the observed signals directly as the sum of images of 
original sources, by which the scaling ambiguity is implicitly avoided and the 
number of unknown free parameters in the BSS model is decreased 
\cite{duong,cardoso2008,planck2013}. This motivates us for this 
study, because the way to reconstruct the signal images (either $\s^1$ or 
$\s^2$) is an important topic.

In this paper, we study two widely used methods to estimate the source images: 
One approach performs the least-squares projection of a separated source on 
the subspace spanned by $\X$. The other approach assumes that a blind estimate 
of a demixing transform is available and exploits its inverse to compute the 
sources' images. Both estimators are invariant to the unknown scaling of the 
separated sources. The goal of this study is to compare the estimators, which 
was not performed yet, and to provide a guidance which estimator is 
advantageous compared to the other from different aspects. We also show 
conditions under which the estimators are equivalent.

The following section introduces the estimators and points to their important 
properties and relations. 
Section III contains a perturbation analysis that studies cases where the 
estimated demixing transform contains ``small'' errors. Section IV studies 
properties of the least-squares estimator in underdetermined situations, that 
is, when there are more original signals than the observed ones. Section V 
presents results of simulations, and, finally, Section VI demonstrates two 
applications.

\section{Source Image Estimators}

Consider an exact demixing transform as a regular $d\times d$ matrix ${\bf W}$ 
defined as such that
\begin{equation}\label{separation}
{\bf W}{\bf H} = {\tt bdiag}(\bLambda_1,\bLambda_2),
\end{equation}
where $\bLambda_1$ and $\bLambda_2$ are arbitrary nonsingular matrices 
representing the random scaling factors of dimensions $m\times m$ and 
$(d-m)\times (d-m)$, 
respectively; ${\tt bdiag}(\cdot)$ denotes a block-diagonal matrix with the 
arguments on its block-diagonal. By applying $\W$ to $\x$, the outputs are
\begin{equation}\label{subspaces}
\y={\bf W}\x=\W\HH\s=\begin{pmatrix}\bLambda_1\s_1 \\ \bLambda_2\s_2 
\end{pmatrix}=\begin{pmatrix}\y_1 \\ \y_2 \end{pmatrix}.
\end{equation}
The components $\y_1=\bLambda_1\s_1$ and $\y_2=\bLambda_2\s_2$ are separated in 
the sense that each is a mixture only of $\s_1$ and $\s_2$, respectively. 

Let  $\W_1$ and $\W_2$ be sub-matrices of $\W$ such that $\W=[\W_1;\W_2]$, and 
$\W_1$ contains the first $m$ rows of $\W$, i.e., $\y_1=\W_1\x$. 
From (\ref{separation}) it follows that ${\bf W}$ is demixing if and only 
if\footnote{A more general definition is that ${\bf W}$ is demixing if and only 
if $\W_1\HH_2\s_2={\bf 0}$ and $\W_2\HH_1\s_1={\bf 0}$. However, we will assume 
that the 
mixing model (\ref{model}) is determined, so cases where 
$\W_1\HH_2\s_2={\bf 0}$ while $\W_1\HH_2\neq {\bf 0}$ and similar do not 
exist.}  
$\W_1\HH_2={\bf 0}$ and $\W_2\HH_1={\bf 0}$.

Throughout the paper we will occasionally use the following assumptions. 
Consider an estimated demixing matrix $\widehat\W$.
\begin{itemize}
	\item[\textsf{A1}($i$)] The assumption that $\widehat\W_i\HH_j={\bf 0}$ 
	where 
	$j\in\{1,2\}, 
	j\neq i$. Assuming an exact demixing transform thus corresponds to A1($1$) 
	simultaneously with A1($2$).
	\item[\textsf{A2}] The assumption of {\em uncorrelatedness of $\s_1$ and 
	$\s_2$} 
	means that ${\rm E}[\s_1\s_2^H]={\bf 0}$. 
	\item[\textsf{A3}] The assumption of {\em orthogonality of $\Y_1$ and 
	$\Y_2$}, that is,
	\begin{equation}\label{orthconst1}
	{\bf Y}_1{\bf Y}_2^H/N={\bf 0},
	\end{equation}
	in BSS also known as the orthogonal constraint \cite{cardoso1994}, 
	means that the sample-based estimate of ${\rm E}[\y_1\y_2^H]$ is exactly 
	equal to zero.
\end{itemize}
In the determined case $r=d$ and under \textsf{A1}(1) and \textsf{A1}(2), 
the condition \eqref{orthconst1} corresponds with 
\begin{equation}\label{orthconst2}
{\bf S}_1{\bf S}_2^H/N={\bf 0},
\end{equation}
but not generally so when $r>d$. The latter condition could be seen as a 
stronger alternative to \textsf{A2}.

For example, the orthogonal constraint \textsf{A3} is used by some ICA methods 
such as is Symmetric or Deflation FastICA \cite{fastica}. 
There are several reasons for this. First, \textsf{A2} is the necessary 
condition of independence of the original 
signals, so \textsf{A3} is a practical way to decrease the number of unknown 
parameters in ICA. Second, \textsf{A3} helps to achieve the global convergence 
(to find all independent components) and prevents algorithms from finding the 
same component twice. Finally, in the model \eqref{model2} with the \textsf{A3} 
constraint, $\widehat\W_i$ is already determined up to a 
scaling matrix when $\widehat\W_j$ is given, $j\neq i$, and vice versa.

\subsection{Reconstruction Using Inverse of Demixing Matrix}
The estimator to retrieve $\s^i$ described here will be abbreviated as INV. 
\begin{definition}[INV]
Let $\widehat\W= [\widehat\W_1 ; \widehat\W_2]$ denote an estimated demixing 
matrix by a BSS method, 
and let $\widehat\A$ be its inverse matrix, i.e., $\widehat\A \eqdef 
\widehat\W^{-1}$. Let  $\widehat\A = [\widehat\A_1 
\, \widehat\A_2]$ be divided in the same way as the system matrix $\HH$. Then, 
the INV
estimator is defined as
\begin{equation}\label{MDP}
	\widehat\s^i_{\rm INV}=\widehat\A_i\widehat\W_i\x\qquad\text{or}\qquad 
	\widehat\bS^i_{\rm INV}=\widehat\A_i\widehat\W_i\X,
\end{equation}
for $i\in\{1,2\}$.
\end{definition}

In particular, INV is popular in the frequency-domain ICA for audio source 
separation; see, e.g., \cite{matsuoka, ukai, francesco}. The following two 
propositions point to its important properties.

\begin{theorem}[consistency of INV]
Consider an exact demixing transform $\W$ satisfying \textsf{A1}($1$) and 
\textsf{A1}($2$), that 
is, satisfying (\ref{separation}); let $\A= [\A_1 \, \A_2]$ be its inverse 
matrix. For $i\in\{1,2\}$, it holds that
\begin{equation}
	\widehat\s^i_{\rm INV}=\A_i\W_i\x=\s^i.
\end{equation}
\end{theorem}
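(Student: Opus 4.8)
The plan is to reduce everything to the block structure implied by (\ref{separation}). First I would observe that when the demixing transform is exact we have $\widehat\W=\W$ and hence $\widehat\A=\widehat\W^{-1}=\A$, so that the claimed identity $\widehat\s^i_{\rm INV}=\A_i\W_i\x$ is merely a restatement of the definition (\ref{MDP}). The actual content of the proposition is therefore the second equality, $\A_i\W_i\x=\s^i$, and the whole argument is an exercise in multiplying out block matrices.

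Next I would extract the needed block identities. Reading off the four blocks of (\ref{separation}) with $\W=[\W_1;\W_2]$ and $\HH=[\HH_1\,\HH_2]$ gives the diagonal relations $\W_1\HH_1=\bLambda_1$, $\W_2\HH_2=\bLambda_2$ and the off-diagonal relations $\W_1\HH_2={\bf 0}$, $\W_2\HH_1={\bf 0}$ — the latter two being exactly assumptions \textsf{A1}(1) and \textsf{A1}(2). Multiplying (\ref{separation}) on the left by $\A=\W^{-1}$ yields $\HH=\A\,{\tt bdiag}(\bLambda_1,\bLambda_2)$, and splitting columns with $\A=[\A_1\,\A_2]$ gives $\HH_1=\A_1\bLambda_1$ and $\HH_2=\A_2\bLambda_2$.

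Then, for a fixed $i\in\{1,2\}$ with $j\neq i$, I would substitute $\x=\HH_1\s_1+\HH_2\s_2$ into $\A_i\W_i\x$ and distribute, obtaining $\A_i\W_i\x=\A_i(\W_i\HH_i)\s_i+\A_i(\W_i\HH_j)\s_j$. The cross term vanishes by \textsf{A1}($i$), and $\W_i\HH_i=\bLambda_i$ collapses the first term to $\A_i\bLambda_i\s_i$. Finally, using $\A_i\bLambda_i=\HH_i$ from the previous step, this equals $\HH_i\s_i=\s^i$, which is the assertion.

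I do not expect a genuine obstacle here; the only point requiring care is the bookkeeping of the block factorization — one must invert $\W\HH={\tt bdiag}(\bLambda_1,\bLambda_2)$ on the left side to get $\HH_i=\A_i\bLambda_i$ (rather than a relation involving $\bLambda_i^{-1}$), and consistently treat $\W_i$ as a row block, $\A_i$ and $\HH_i$ as column blocks. Everything else follows in one line from $\A\W=\W\A=\I$ together with (\ref{separation}).
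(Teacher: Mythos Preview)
Your argument is correct and essentially identical to the paper's own proof: both derive the column-block identity $\A_i\bLambda_i=\HH_i$ (equivalently $\A_i=\HH_i\bLambda_i^{-1}$) from (\ref{separation}), substitute $\x=\HH_1\s_1+\HH_2\s_2$, kill the cross term via \textsf{A1}($i$), and collapse $\A_i\bLambda_i\s_i$ to $\HH_i\s_i=\s^i$. Your version is a bit more explicit about the block bookkeeping, but there is no substantive difference.
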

\begin{proof}
By (\ref{separation}) it holds that $\A_i=\HH_i\bLambda_i^{-1}$. Then,
\begin{align}
\A_i\W_i\x&=\A_i\W_i(\HH_1\s_1+\HH_2\s_2)\\ &=\A_i\W_i\HH_i\s_i \\
& = \HH_i\bLambda_i^{-1}\bLambda_i\s_i=\HH_i\s_i=\s^i.
\end{align}
\end{proof}

\begin{theorem}[scaling invariance of INV]
Let $\widehat\W$ be an estimated demixing transform and 
$\widehat\A=\widehat\W^{-1}$. The INV 
estimator is invariant to substitution $\widehat\W\leftarrow{\tt 
bdiag}(\bLambda_1,\bLambda_2)\widehat\W$, where $\bLambda_1$ and $\bLambda_2$ 
are 
arbitrary nonsingular matrices of dimensions $m\times m$ and 
$(d-m)\times (d-m)$, respectively.
\end{theorem}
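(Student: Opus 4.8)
The plan is to exploit the fact that inversion turns a left multiplication by a block‑diagonal matrix into a right multiplication by the inverse block‑diagonal matrix, and that such a right multiplication merely rescales the \emph{column} blocks of $\widehat\A$ in a way that exactly cancels the rescaling of the \emph{row} blocks of $\widehat\W$.

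Concretely, first I would set $\widehat\W' \eqdef {\tt bdiag}(\bLambda_1,\bLambda_2)\widehat\W$. Since $\bLambda_1$ and $\bLambda_2$ are nonsingular, ${\tt bdiag}(\bLambda_1,\bLambda_2)$ is nonsingular with inverse ${\tt bdiag}(\bLambda_1^{-1},\bLambda_2^{-1})$; as $\widehat\W$ is a (regular) demixing transform, $\widehat\W'$ is invertible and $\widehat\A'\eqdef(\widehat\W')^{-1}=\widehat\W^{-1}\,{\tt bdiag}(\bLambda_1^{-1},\bLambda_2^{-1})=\widehat\A\,{\tt bdiag}(\bLambda_1^{-1},\bLambda_2^{-1})$. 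Next I would read off the sub‑blocks in the partitions used by Definition~1 (INV): because ${\tt bdiag}(\bLambda_1,\bLambda_2)$ acts block‑wise on rows, the first $m$ rows of $\widehat\W'$ are $\widehat\W'_1=\bLambda_1\widehat\W_1$ and the remaining $d-m$ rows are $\widehat\W'_2=\bLambda_2\widehat\W_2$; dually, right multiplication by ${\tt bdiag}(\bLambda_1^{-1},\bLambda_2^{-1})$ acts block‑wise on columns, so the first $m$ columns of $\widehat\A'$ are $\widehat\A'_1=\widehat\A_1\bLambda_1^{-1}$ and the last $d-m$ columns are $\widehat\A'_2=\widehat\A_2\bLambda_2^{-1}$.

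Finally I would substitute these into \eqref{MDP}: for $i\in\{1,2\}$,
\begin{equation}
\widehat\A'_i\widehat\W'_i\x=\widehat\A_i\bLambda_i^{-1}\bLambda_i\widehat\W_i\x=\widehat\A_i\widehat\W_i\x=\widehat\s^i_{\rm INV},
\end{equation}
and the identical chain holds with $\x$ replaced by the data matrix $\X$. Hence the INV estimate is unchanged under the substitution, which is the claim.

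I do not anticipate a genuine obstacle here; the only point that deserves an explicit line is the block factorization of the inverse, namely that $\bigl({\tt bdiag}(\bLambda_1,\bLambda_2)\widehat\W\bigr)^{-1}=\widehat\W^{-1}{\tt bdiag}(\bLambda_1^{-1},\bLambda_2^{-1})$ and that the induced partition of $\widehat\A'$ into column blocks of widths $m$ and $d-m$ is compatible with the partition of $\widehat\W'$ into row blocks of the same widths. Both follow immediately from the nonsingularity of ${\tt bdiag}(\bLambda_1,\bLambda_2)$ and from how left/right block‑diagonal multiplication acts on rows/columns, so the argument is essentially a one‑line cancellation once the bookkeeping of the blocks is in place.
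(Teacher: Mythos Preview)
Your argument is correct and is exactly the paper's approach: the paper's proof is the single observation that $[\bLambda_1\widehat\W_1;\bLambda_2\widehat\W_2]^{-1}=[\widehat\A_1\bLambda_1^{-1}\ \widehat\A_2\bLambda_2^{-1}]$, and your proposal simply spells out this block computation and the resulting cancellation $\widehat\A_i\bLambda_i^{-1}\bLambda_i\widehat\W_i=\widehat\A_i\widehat\W_i$ in more detail.
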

\begin{proof}
The proof follows from the fact that $[\bLambda_1\widehat\W_1 ; 
\bLambda_2\widehat\W_2]^{-1}=[\widehat\A_1\bLambda_1^{-1} \, 
\widehat\A_2\bLambda_2^{-1}]$.
\end{proof}

One advantage is that the transform $\A_i\W_i$ is purely a function of ${\bf 
W}$ and does not explicitly depend on the signals or on their statistics, e.g., 
on the sample-based covariance matrix. This makes the approach suitable for 
real-time processing \cite{matsuoka2}.

On the other hand, $\A_i\W_i$ is a function of the whole $\W$ through the matrix inverse; it does not depend solely on $\W_i$, as one would expect when only $\s^i$ should be estimated. Formula (\ref{MDP}) can thus be used only if the whole demixing ${\bf W}$ is available. BSS methods extracting only selected components (e.g., one-unit FastICA \cite{fastica}) cannot be applied together with (\ref{MDP}). 
Next, it follows that potential errors in the estimate of $\W_2$ can have an 
adverse effect on the estimation of $\s^1$. This is analyzed in Section~III.

\subsection{Least-squares reconstruction}
Another approach to estimate $\s^i$ is to find an optimum projection of the 
separated components back to the observed signals $\x$ in order to find their 
contribution to them. A straightforward way is to use the quadratic criterion, 
that is, least squares, which gives two estimators that will be abbreviated by 
LS.


\begin{definition}[LS]
	Let $\widehat\W_i$ denote an estimated part of a demixing matrix, 
	$i\in\{1,2\}$, 
	$\y_i=\widehat\W_i\x$ and $\Y_i=\widehat\W_i\X$. The theoretical LS 
	estimator of $\s^i$ is 
	defined as
	\begin{align}
	\widehat\s^i_{\rm LS}&=\left(\arg\min_{\bf V}{\rm E}\left[\|\x-{\bf 
		V}\y_i\|^2\right]\right)\x\label{LSsymb}\\
	&=\C\widehat\W_i^H(\widehat\W_i\C\widehat\W_i^H)^{-1}\widehat\W_i\x,
	\end{align}
	where $\C={\rm E}[\x\x^H]$. The practical LS estimator of $\bS^i$ is 
	defined as
	\begin{align}
	\widehat\bS^i_{\rm LS}&=\left(\arg\min_{\bf V}\|\X-{\bf 
	V}\Y_i\|_F^2\right)\X\label{LS}\\
	&=\widehat\C\widehat\W_i^H(\widehat\W_i\widehat\C\widehat\W_i^H)^{-1}
	\widehat\W_i\X,
	\end{align}
	where $\widehat\C=\X\X^H/N$, and $\|\cdot\|_F$ denotes the Frobenius norm. 
\end{definition}

\begin{theorem}[scaling invariance of LS]
	The estimators (\ref{LSsymb}) and (\ref{LS}) are invariant to a scaling 
	transform $\widehat\W_i\leftarrow\bLambda_i\widehat\W_i$ where $\bLambda_i$ 
	is a 
	nonsingular 
	square matrix. 
\end{theorem}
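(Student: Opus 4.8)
The plan is to verify the claim directly from the closed-form expressions given in the definition; no machinery beyond elementary matrix algebra is needed, and I would present the argument in two complementary ways.

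The conceptual reason is that the substitution $\widehat\W_i\leftarrow\bLambda_i\widehat\W_i$ only replaces the separated component $\y_i=\widehat\W_i\x$ by $\bLambda_i\y_i$ (and $\Y_i$ by $\bLambda_i\Y_i$). Since $\bLambda_i$ is nonsingular, right-multiplication by $\bLambda_i$ is a bijection of the space of $d\times m$ complex matrices onto itself, so the set of candidate reconstructions $\{\,{\bf V}\bLambda_i\y_i\,\}$ coincides with $\{\,{\bf V}\y_i\,\}$. Hence the least-squares problems in (\ref{LSsymb}) and (\ref{LS}) have the same optimal fitted value, and since the LS estimator is precisely that fitted value, i.e. the optimal ${\bf V}$ applied to $\y_i$ (resp.\ $\Y_i$), it is unchanged.

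For the explicit version I would substitute $\bLambda_i\widehat\W_i$ for $\widehat\W_i$ in $\C\widehat\W_i^H(\widehat\W_i\C\widehat\W_i^H)^{-1}\widehat\W_i$ and cancel, using $(ABC)^{-1}=C^{-1}B^{-1}A^{-1}$ on the three invertible factors $\bLambda_i$, $\widehat\W_i\C\widehat\W_i^H$, and $\bLambda_i^H$:
\begin{align}
\C(\bLambda_i\widehat\W_i)^H\big((\bLambda_i\widehat\W_i)\C(\bLambda_i\widehat\W_i)^H\big)^{-1}(\bLambda_i\widehat\W_i)
&=\C\widehat\W_i^H\bLambda_i^H\,\bLambda_i^{-H}(\widehat\W_i\C\widehat\W_i^H)^{-1}\bLambda_i^{-1}\,\bLambda_i\widehat\W_i\\
&=\C\widehat\W_i^H(\widehat\W_i\C\widehat\W_i^H)^{-1}\widehat\W_i,
\end{align}
where $\bLambda_i^{-H}\eqdef(\bLambda_i^{-1})^H$. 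The identical lines with $\widehat\C$ in place of $\C$ and $\X$ in place of $\x$ settle the practical estimator (\ref{LS}).

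There is no genuine obstacle here; the only points deserving a remark concern well-definedness. The formulas presume that $\widehat\W_i\C\widehat\W_i^H$ (resp.\ $\widehat\W_i\widehat\C\widehat\W_i^H$) is invertible, which holds when $\widehat\W_i$ has full row rank $m$ and $\C$ (resp.\ $\widehat\C$, generically so since $N\gg d$) is positive definite; multiplying $\widehat\W_i$ on the left by the nonsingular $\bLambda_i$ preserves full row rank, so the transformed estimator is defined exactly when the original one is. It is also worth stressing that $\bLambda_i$ need not be Hermitian or positive, only nonsingular, and that the Hermitian transposes above are precisely what make the cancellation go through for complex-valued mixtures; this is the single-block counterpart of the cancellation used in the proof of the scaling invariance of INV.
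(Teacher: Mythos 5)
Your proof is correct; the explicit cancellation $\C\widehat\W_i^H\bLambda_i^H\,\bLambda_i^{-H}(\widehat\W_i\C\widehat\W_i^H)^{-1}\bLambda_i^{-1}\,\bLambda_i\widehat\W_i=\C\widehat\W_i^H(\widehat\W_i\C\widehat\W_i^H)^{-1}\widehat\W_i$ is exactly the ``straightforward'' verification the paper leaves to the reader (it gives no proof of Proposition~3). The conceptual remark about the candidate set $\{{\bf V}\y_i\}$ being invariant, and the observations on well-definedness and on why the Hermitian transposes matter in the complex case, are accurate additions but not a different method.
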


The proof of Proposition~3 is straightforward. It is worth pointing out that the LS estimators
are purely functions of $\widehat\W_i$ as compared to INV. Also, they involve 
the 
covariance matrix or its sample-based estimate. However, their consistency is 
not guaranteed under \textsf{A1}($i$) even if the assumption holds for both 
$i=1,2$ as assumed in Proposition~1. In fact, additional assumptions are needed 
as is 
shown by the following proposition.

\begin{theorem}
Let $\W_i$ be a part of an exact demixing matrix, so \textsf{A1}($i$) holds. 
Let $\W_i\HH_i=\bLambda_i$ be nonsingular. 
Then, under \textsf{A2} it holds that
\begin{equation}\label{consistencyLS1}
	\widehat\s^i_{\rm LS}=\C\W_i^H(\W_i\C\W_i^H)^{-1}\W_i\x=\s^i,
\end{equation}
and under \textsf{A3}  it holds that
\begin{equation}\label{consistencyLS2}
	\widehat\bS^i_{\rm 
	LS}=\widehat\C\W_i^H(\W_i\widehat\C\W_i^H)^{-1}\W_i\X=\bS^i.
\end{equation}
\end{theorem}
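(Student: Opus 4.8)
The plan is to prove both identities by direct substitution into the LS formulas, organising $\x$ (resp.\ $\X$) along the two subspaces and tracking which blocks survive. The driving mechanism is twofold: \textsf{A1}($i$) forces $\W_i\HH_j={\bf 0}$ for $j\neq i$, which both collapses the separated output to $\W_i\x=\W_i\HH_i\s_i=\bLambda_i\s_i$ and annihilates the off-block contributions to $\C$ once it is sandwiched between $\W_i$ and $\W_i^H$; then \textsf{A2} (resp.\ \textsf{A3}) removes the cross-covariance between the two subspaces, which is precisely what is needed for the remaining projector to act as the identity on the target subspace.

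For \eqref{consistencyLS1} I would write $\C={\rm E}[\x\x^H]$ with $\x=\HH_1\s_1+\HH_2\s_2$, so that $\C=\C_1+\C_2+\bXi$, where $\C_k=\HH_k{\rm E}[\s_k\s_k^H]\HH_k^H$ and $\bXi=\HH_1{\rm E}[\s_1\s_2^H]\HH_2^H+\HH_2{\rm E}[\s_2\s_1^H]\HH_1^H$; under \textsf{A2} one has $\bXi={\bf 0}$. Using $\W_i\HH_j={\bf 0}$ for $j\neq i$ together with $\W_i\HH_i=\bLambda_i$, one obtains $\W_i\C\W_i^H=\bLambda_i{\rm E}[\s_i\s_i^H]\bLambda_i^H$ and $\C\W_i^H=\HH_i{\rm E}[\s_i\s_i^H]\bLambda_i^H$. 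Plugging these in along with $\W_i\x=\bLambda_i\s_i$, the $\bLambda_i$ factors and the ${\rm E}[\s_i\s_i^H]$ factors cancel in pairs — invoking that $\bLambda_i$ and ${\rm E}[\s_i\s_i^H]$ are nonsingular, the latter being an implicit standing hypothesis that the proof should record — and what is left is $\HH_i\s_i=\s^i$.

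For \eqref{consistencyLS2} the computation is the same with expectations replaced by the sample estimate $\widehat\C=\X\X^H/N$: here $\W_i\X=\bLambda_i\bS_i$, and the sample cross term $\HH_1(\bS_1\bS_2^H/N)\HH_2^H$ together with its transpose vanishes because, under \textsf{A1}(1) and \textsf{A1}(2), the constraint \textsf{A3} is equivalent to ${\bf S}_1{\bf S}_2^H/N={\bf 0}$, as noted after \eqref{orthconst2}. Hence $\widehat\C=\HH_1(\bS_1\bS_1^H/N)\HH_1^H+\HH_2(\bS_2\bS_2^H/N)\HH_2^H$, and the identical cancellation — this time requiring $\bS_i\bS_i^H/N$ to be nonsingular, which holds generically for $N\gg d$ — yields $\HH_i\bS_i=\bS^i$.

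I do not anticipate a genuine obstacle; the work is bookkeeping of block products. The point that deserves care, and the reason \textsf{A1}($i$) alone does not suffice here (unlike for INV in Proposition~1), is to exhibit exactly where \textsf{A2} enters: if $\bXi\neq{\bf 0}$ then $\W_i\C\W_i^H$ is still equal to $\bLambda_i{\rm E}[\s_i\s_i^H]\bLambda_i^H$, but $\C\W_i^H$ picks up the extra block $\HH_j{\rm E}[\s_j\s_i^H]\bLambda_i^H$ with $j\neq i$, which propagates through the formula to a spurious additive term $\HH_j{\rm E}[\s_j\s_i^H]({\rm E}[\s_i\s_i^H])^{-1}\s_i$ on top of $\s^i$ (and analogously with sample moments for the practical estimator). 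Displaying this residual makes transparent that \textsf{A2} (resp.\ \textsf{A3}) is not merely convenient but is what makes it vanish.
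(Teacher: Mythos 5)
Your proof is correct and follows essentially the same route as the paper's: the paper writes $\C=\HH\,\C_{\s}\HH^H$ with $\C_\s$ block-diagonal under \textsf{A2} and cancels $\HH_1\C_{\s_1}\bLambda_1^H(\bLambda_1\C_{\s_1}\bLambda_1^H)^{-1}\bLambda_1\s_1=\HH_1\s_1$, which is exactly your block-expanded computation, and it likewise treats \eqref{consistencyLS2} as the sample-moment analogue via \eqref{orthconst2}. Your explicit remarks on the needed nonsingularity of ${\rm E}[\s_i\s_i^H]$ (which the paper also records) and on the residual term when \textsf{A2} fails are consistent with, though more detailed than, the published argument.
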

\begin{proof}
The proof will be given for (\ref{consistencyLS1}) while the one for (\ref{consistencyLS2}) is analogous.

According to (\ref{model}) it holds that $\C={\rm E}[\x\x^H]=\HH\,\C_{\s}\HH^H$ 
where $\C_\s={\rm E}[\s\s^H]$. Under \textsf{A2} it follows that $\C_{\s}$ has 
the  
block-diagonal structure $\C_{\s}={\tt bdiag}(\C_{\s_1},\C_{\s_2})$ where 
$\C_{\s_1}\eqdef{\rm E}[\s_1\s_1^H]$ and $\C_{\s_2}\eqdef{\rm E}[\s_2\s_2^H]$ 
are regular (because $\C_{\s}$ is assumed to be regular). Without a loss of 
generality, let $i=1$. Since $\W_1\HH=(\bLambda_1\,{\bf 0})$,
\begin{align}
&\widehat\s^i_{\rm LS}=\C\W_1^H(\W_1\C\W_1^H)^{-1}\W_1\x \\&= 
\HH\C_{\s}\HH^H\W_1^H(\W_1\HH\C_{\s}\HH^H\W_1^H)^{-1}\W_1\HH\s \\
& = \HH_1\C_{\s_1}\bLambda_1^H(\bLambda_1\C_{\s_1}\bLambda_1^H)^{-1}\bLambda_1\s_1\\
& = \HH_1\s_1 = \s^1.
\end{align}
\end{proof}

It is worth pointing out that LS involves a matrix inverse, 
namely, of $\W_i\C\W_i^H$ or of $\W_i\widehat\C\W_i^H$. This matrix (actually, 
the (sample) covariance of ($\Y_i$) $\y_i$) has a lower dimension than $\W$ and 
is more likely well conditioned so that the computation of its inverse is 
numerically stable.

\subsection{On the equivalence between INV and LS under the orthogonal 
constraint}
Without a loss of generality, assume that $\widehat{\bf W}_1$ is given.
Under the assumption \textsf{A3}, $\widehat\W_2$ is already determined up to 
a scaling matrix through \eqref{orthconst1}, so the whole $\widehat{\W}$ is 
actually available, and the 
INV estimator (\ref{MDP}) can be applied. The goal here is to verify that, in 
this case, INV coincides with LS.

Let ${\bf B}$ denote the unknown lower part of the entire demixing 
$\widehat{\bf 
W}=[\widehat\W_1; {\bf B}]$. 
Then,
\begin{equation}
	\widehat{\bf W}{\bf X}=
	\begin{pmatrix}
	\widehat\W_1\X \\ {\bf B}\X
	\end{pmatrix}=
	\begin{pmatrix}
	\Y_1 \\ \Y_2
	\end{pmatrix}.
\end{equation}
The condition \eqref{orthconst1} 
requires that
\begin{equation}\label{cond2}
	{\bf B}\widehat\C\widehat\W_1^H={\bf 0},
\end{equation}
which means that the rows of ${\bf B}$ are orthogonal to the columns of 
$\widehat\C\widehat\W_1^H$. It can be verified that any ${\bf B}$ of the form
\begin{equation}
	{\bf B} = {\bf Q}({\bf 
	I}-\widehat\C\widehat\W_1^H(\widehat\W_1\widehat\C\widehat\C\widehat\W_1^H)^{-1}
	\widehat\W_1\widehat\C),
\end{equation}
where ${\bf Q}$ can be an arbitrary $(d-m)\times m$ full-row-rank matrix such that ${\bf B}$ has full row-rank, meets the condition (\ref{cond2}).

Now, to apply (\ref{MDP}), $\widehat\A_1$ must be computed, which consists of 
first $m$ 
columns of $\widehat\A=\widehat{\bf W}^{-1}$, so it satisfies
\begin{align}
\widehat\W_1\widehat\A_1&={\bf I},\label{cond3}\\
{\bf B}\A_1&={\bf Q}({\bf 
I}-\widehat\C\widehat\W_1^H(\widehat\W_1\widehat\C\widehat\C\widehat\W_1^H)^{-1}
\widehat\W_1\widehat\C)\widehat\A_1={\bf  0}.
\end{align}
The latter equation is satisfied whenever $\widehat\A_1=\widehat{\bf 
C}\widehat\W_1^H{\bf R}$ where ${\bf R}$ is an $m\times m$ matrix. To satisfy 
also (\ref{cond3}), ${\bf R}=(\widehat\W_1\widehat\C\widehat\W_1^H)^{-1}$. 
Finally, 
\begin{equation}
\bS^1_{\rm INV}=\widehat\A_1\widehat\W_1\X=\widehat{\bf 
C}\widehat\W_1^H(\widehat\W_1\widehat\C\widehat\W_1^H)^{-1}\widehat\W_1\X,
\end{equation}
which allows us to conclude this section by the following proposition.

\begin{theorem}
Let $\widehat\W_i$ be a part of an estimated demixing matrix, $i\in\{1,2\}$, 
and let \textsf{A3} be assumed. 
Then, $$\widehat\bS^i_{\rm INV}=\widehat\bS^i_{\rm LS}.$$
\end{theorem}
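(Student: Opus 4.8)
The plan is to establish the identity for $i=1$ and then obtain $i=2$ by the symmetric relabeling of the two index blocks; the argument for $i=1$ is essentially the computation carried out in this subsection, repackaged so that the only nontrivial fact used is a single projection identity.

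First I would fix $\widehat\W_1$ and record that, under \textsf{A3}, the lower block ${\bf B}\eqdef\widehat\W_2$ of the now-complete demixing matrix $\widehat\W=[\widehat\W_1;{\bf B}]$ is constrained only by the orthogonality requirement \eqref{cond2}, i.e. ${\bf B}\widehat\C\widehat\W_1^H={\bf 0}$; here $\widehat\W_1\widehat\C\widehat\W_1^H$ is invertible because $\widehat\W_1$ has full row rank and $\widehat\C=\X\X^H/N$ is positive definite (assuming $\X$ full rank). Setting $P\eqdef\widehat\C\widehat\W_1^H(\widehat\W_1\widehat\C\widehat\C\widehat\W_1^H)^{-1}\widehat\W_1\widehat\C$, a one-line substitution gives $P\widehat\C\widehat\W_1^H=\widehat\C\widehat\W_1^H$, hence $({\bf I}-P)\widehat\C\widehat\W_1^H={\bf 0}$, which is exactly why every ${\bf B}$ of the stated parametric form (with ${\bf Q}$ an arbitrary full-row-rank matrix making ${\bf B}$ full row rank) satisfies \eqref{cond2}.

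Next I would identify $\widehat\A_1$, the first $m$ columns of $\widehat\A=\widehat\W^{-1}$, as the unique $d\times m$ matrix satisfying the two block equations coming from $\widehat\W\widehat\A={\bf I}$, namely $\widehat\W_1\widehat\A_1={\bf I}$ and ${\bf B}\widehat\A_1={\bf 0}$. By the identity $({\bf I}-P)\widehat\C\widehat\W_1^H={\bf 0}$, any matrix of the form $\widehat\A_1=\widehat\C\widehat\W_1^H{\bf R}$ automatically satisfies ${\bf B}\widehat\A_1={\bf Q}({\bf I}-P)\widehat\C\widehat\W_1^H{\bf R}={\bf 0}$ for \emph{every} admissible ${\bf B}$, and imposing $\widehat\W_1\widehat\A_1={\bf I}$ then forces ${\bf R}=(\widehat\W_1\widehat\C\widehat\W_1^H)^{-1}$. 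Thus $\widehat\A_1=\widehat\C\widehat\W_1^H(\widehat\W_1\widehat\C\widehat\W_1^H)^{-1}$, independently of which completion ${\bf B}$ was used.

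Substituting into the INV definition \eqref{MDP} then yields $\widehat\bS^1_{\rm INV}=\widehat\A_1\widehat\W_1\X=\widehat\C\widehat\W_1^H(\widehat\W_1\widehat\C\widehat\W_1^H)^{-1}\widehat\W_1\X$, which is precisely $\widehat\bS^1_{\rm LS}$ from the LS definition; the case $i=2$ is identical after exchanging the roles of the blocks, and the theoretical versions coincide by the same argument with $\widehat\C$ replaced by $\C$. I do not expect a genuine obstacle here — the work is bookkeeping rather than calculation: one must be careful that $\widehat\A_1$ is genuinely pinned down by the two block equations (which needs the full-rank/positive-definiteness facts above) and that the result does not depend on the non-unique ${\bf B}$, and both of these reduce to the single identity $({\bf I}-P)\widehat\C\widehat\W_1^H={\bf 0}$.
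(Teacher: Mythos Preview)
Your proposal is correct and follows essentially the same route as the paper's own argument in the subsection preceding the proposition: parametrize the admissible completions ${\bf B}$ via the orthogonality condition ${\bf B}\widehat\C\widehat\W_1^H={\bf 0}$, solve the block equations $\widehat\W_1\widehat\A_1={\bf I}$, ${\bf B}\widehat\A_1={\bf 0}$ to obtain $\widehat\A_1=\widehat\C\widehat\W_1^H(\widehat\W_1\widehat\C\widehat\W_1^H)^{-1}$, and read off the INV formula as the LS formula. Your write-up is slightly more explicit than the paper about the projection identity $({\bf I}-P)\widehat\C\widehat\W_1^H={\bf 0}$, the invertibility hypotheses, and the fact that $\widehat\A_1$ is pinned down independently of the particular completion ${\bf B}$, but these are refinements of presentation rather than a different argument.
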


\section{Perturbation Analysis}

Throughout this section, let ${\bf W}$ denote the exact demixing transform, 
that is ${\bf W}={\bf H}^{-1}$. We present an analysis of the sensor response 
estimators (\ref{MDP}) and (\ref{LS}) when ${\bf W}$ is known 
up to a small deviation\footnote{The analysis of (\ref{LSsymb}) follows from 
that of (\ref{LS}) when $\Delta\C={\bf 0}$.}. Let ${\bf 
V}={\bf W}+\bXi$ be the available 
observation of ${\bf W}$ where $\bXi$ is a ``small'' matrix;
${\bf V}_1$ will denote the sub-matrix of ${\bf V}$ containing the first $m$ 
rows; similarly $\bXi=[\bXi_1;\bXi_2]$; ${\bf A}={\bf V}^{-1}$ and ${\bf A}_1$ 
contains the first $m$ columns of ${\bf A}$; let also 
$\Delta\C=\widehat\C-\C$ be ``small'' and of the same asymptotic order as 
$\bXi$ (typically, $\Delta\C=\mathcal{O}_p(N^{-1/2})$ where 
$\mathcal{O}_p(\cdot)$ is the stochastic order symbol).

Now, consider the transform matrices ${\bf T}_1\eqdef\A_1{\bf V}_1$ 
and ${\bf 
T}_2\eqdef\widehat{\bf C}{\bf V}_1^H({\bf V}_1\widehat\C{\bf V}_1^H)^{-1}{\bf 
V}_1$ estimating $\bS^1$ from $\X$, respectively. 
The analysis resides in the computation of their squared distances (the 
Frobenius norm) from the ideal transform, that is, from ${\bf H}_1{\bf W}_1$. 
Using first-order expansions and neglecting higher-order terms, it is derived 
in Appendix that the following approximations hold.
\begin{align}
\left\|\HH_1\W_1-{\bf T}_1\right\|_F^2&\approx\left\|{\bf 
H}\,\bXi\,\HH_1\W_1-\HH_1\bXi_1\right\|_F^2, \label{MDPan1}\\
\left\|\HH_1\W_1-{\bf T}_2\right\|_F^2&\approx
\Bigl\|\HH_1(\bXi_1\C\W_1^H+\W_1\C\,\bXi_1^H)\C_{\s_1}^{-1}\W_1\nonumber\\
(\HH_1\W_1-{\bf 
I})\Delta\C&\W_1^H\C_{\s_1}^{-1}\W_1-\HH_1\bXi_1-\C\,\bXi_1^H
\C_{\s_1}^{-1}\W_1\Bigr\|_F^2\label{proposedan1}.
\end{align}

To provide a deeper insight, we will analyze a particular case where 
$\HH=\W={\bf I}$, $\C_{\s_1}=\sigma_1^2{\bf I}$, and 
$\C_{\s_2}=\sigma_2^2{\bf I}$. Let the elements of $\bXi$ all be independent 
random variables with zero mean such that the variance of each element of 
$\bXi_i$ is equal to $\lambda_i^2$. For further simplification, let the 
elements of $\Delta\C_{ij}$, which denotes the $ij$th block of $\Delta\C$, 
$i,j=1,2$, be also independent random variables
whose variance is equal to 
$\sigma_i\sigma_jC$.
Then, the expectation values of the right-hand sides of (\ref{MDPan1}) and 
(\ref{proposedan1}), 
respectively, are equal to
\begin{align}\label{MDPan2}
{\rm E}\left[\left\|\HH_1\W_1-{\bf T}_1\right\|_F^2\right]&\approx 
(\lambda_1^2+\lambda_2^2)m(d-m),\\
\label{proposedan2}
{\rm E}\left[\left\|\HH_1\W_1-{\bf T}_2\right\|_F^2\right]&\approx
\left[\left(1+\frac{\sigma_2^2}{\sigma_1^2}\right)\lambda_1^2+
\frac{\sigma_2}{\sigma_1}C\right]m(d-m).
\end{align}

Comparing (\ref{MDPan2}) and (\ref{proposedan2}) shows the pros and cons of the 
estimators. The latter depends on $\sigma_2^2/\sigma_1^2$, which reflects the 
ratio between the power of $\s_1$ and that of $\s_2$. The expression 
(\ref{MDPan2}) does not depend on this ratio explicitly\footnote{Typically, 
there is an implicit dependency of the estimation error $\bXi$ on 
$\sigma_2^2/\sigma_1^2$. Therefore, $\lambda_1^2$ as well as 
$\lambda_2^2$ are influenced by $\sigma_1^2$ and $\sigma_2^2$.}. For 
simplicity, let us assume that $\sigma_2^2/\sigma_1^2=1$.

Next, (\ref{proposedan2}) depends on $C$ while (\ref{MDPan2}) is independent of 
it. Since $C$ captures the covariance estimation error in $\Delta\C$, it 
typically decreases with the length of data $N$. Usually, $C$ has asymptotic 
order $\mathcal{O}(N^{-1/2})$; see, e.g., Appendix A.B in \cite{analysisTSP}.
For the sake of the analysis, we will assume that $C=0$. Then, 
(\ref{proposedan2}) changes to
\begin{equation}\label{proposedan3}
{\rm E}\left[\left\|\HH_1\W_1-{\bf T}_2\right\|_F^2\right]\approx
2\lambda_1^2m(d-m).
\end{equation}

The expressions (\ref{MDPan2}) and (\ref{proposedan3}) 
point to the main difference of the estimators when $\HH=\W={\bf I}$: While the 
performance of INV depends on $\lambda_1^2$ and $\lambda_2^2$, that of LS 
depends purely on $\lambda_1^1$. In the special case $\lambda_1^1=\lambda_2^1$, 
the performances coincide.

To verify the theoretical expectations, we conducted a simple simulation where 
$d=20$, $m=5$, $\HH=\W={\bf I}$, $\sigma_1^2=\sigma_2^2=1$, $C=0$, $N=10^5$, 
$\lambda_2^2=10^{-4}$. $\bXi$ and $\bS$ were drawn from complex Gaussian 
distribution with zero mean and unit variance. The average 
squared errors of INV and LS averaged over $100$ trials for each value of 
$\lambda_1$ are compared, respectively, with the expressions (\ref{MDPan2}) 
and (\ref{proposedan3}) in Fig~\ref{fig1}.

The theoretical error of INV is in a good agreement with the experimental one 
for all values of $\lambda_1$. The same holds for that of LS until 
$\lambda_1\geq 0.003$. For smaller values of $\lambda_1$, the experimental 
error of LS is limited while the theoretical one is decreasing. This is caused 
by the influence of $\Delta\C$, which is fully neglected in (\ref{proposedan3}) 
and modeled through $C$ in (\ref{proposedan2}). The experimental error of LS 
evaluated for $N=10^6$ confirms that (\ref{proposedan2}) is a more accurate 
theoretical error of LS than (\ref{proposedan3}).

In this example, INV outperforms LS when $\lambda_1>\lambda_2$, and vice versa. 
However, these results are valid only in the 
special case when $\HH=\W={\bf I}$. Simulations in Section~V consider general 
mixing matrices, thereby compare the estimators in more realistic situations.


%
%

\begin{figure}
\centering
\includegraphics[width=0.95\linewidth]{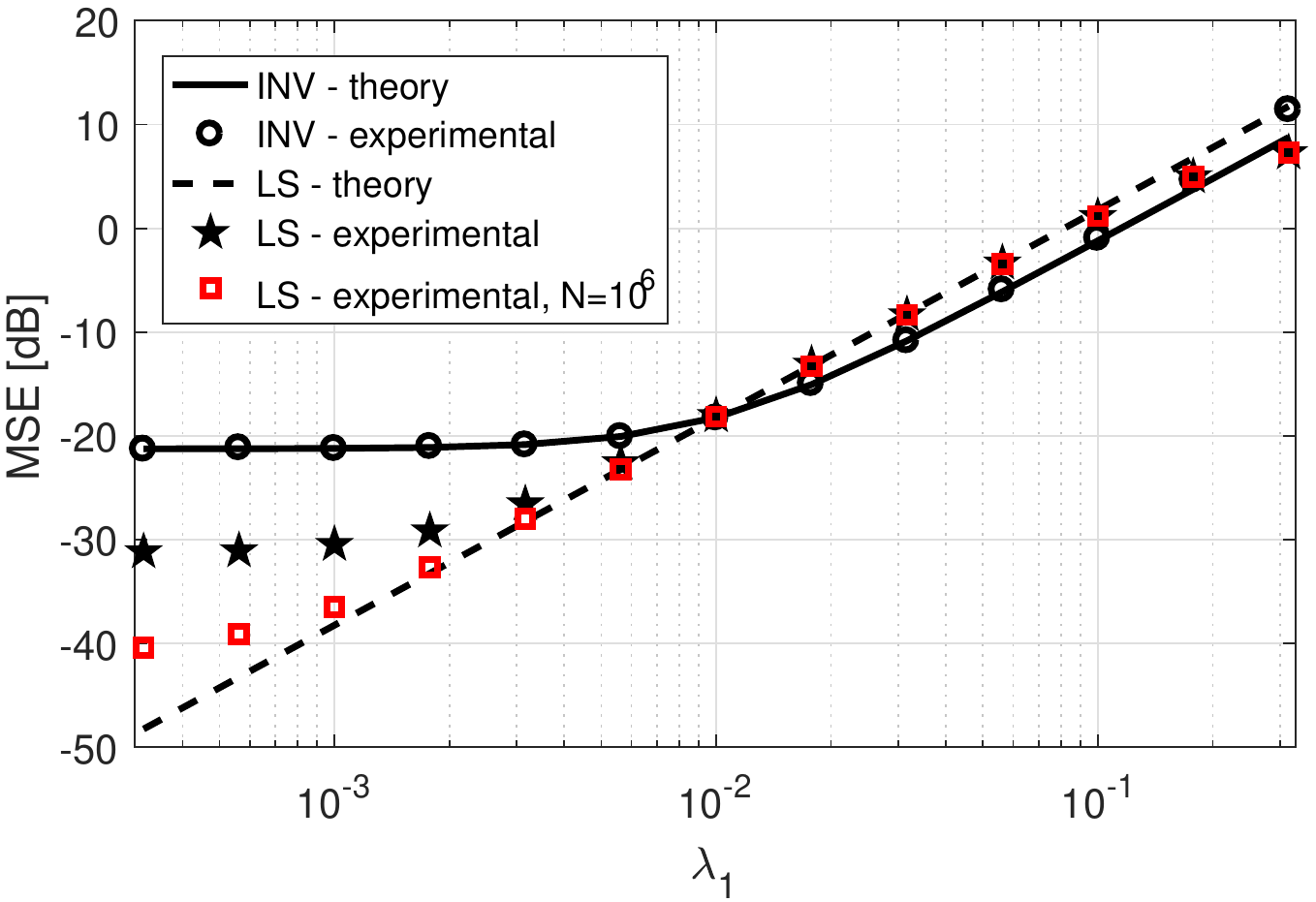}
\caption{\label{fig1} Comparison of theoretical and experimental mean square 
error of INV and LS when $d=20$, $m=5$, $\HH=\W={\bf I}$, 
$\sigma_1^2=\sigma_2^2=1$, $C=0$, $N=10^5$, $\lambda_2^2=10^{-4}$.}
\end{figure}

\section{Noise Extraction from Underdetermined 
Mixtures}\label{section:underdetermined}

\subsection{Mixture Model}
Now we focus on a more realistic scenario that appears in most array processing 
problems.  
Let the mixture be described as
\begin{equation}\label{umodel}
 {\bf x}={\bf H}_1 {\bf s}_1 + \s_2,
\end{equation}
where $\HH_1$ is a $d\times m$ matrix having full column rank, $\s_1$ is an 
$m\times 1$ vector of target components, and $\s_2$ is a $d\times 1$ vector of 
noise signals. 
Note that, in this model, $\s_2$ is simultaneously equal to $\s^2$.

The mixture model corresponds with (\ref{model}), but $\HH$ is equal to 
$[\HH_1\,{\bf I}_{d\times d}]$ and has dimensions $d\times(m+d)$, which makes 
the problem underdetermined ($r=m+d$). 


In general, a linear transform that separates $\s_1$ or $\s_2$ from $\x$ does 
not exist, unless 
(\ref{umodel}) is implicitly regular (e.g., when $\C_{\s_2}$ has rank 
$d-m$)\footnote{For example, model (\ref{umodel}) is often studied under the 
assumption that at most $d$ signals out of ${\bf s}_1$ and  $\s_2$ are active 
at a given time instant; see, e.g. \cite{tifrom,undetermined1}.}. From now on, 
we focus on the difficult case where, generally speaking, neither $\s_1$ nor 
$\s_2$ can be separated. 

\subsection{Target Signal Cancelation and Noise Extraction}
Since the separation of $\s_1$ is not possible, multichannel noise reduction 
systems follow an inverse approach: the target components $\s_1$ are first 
linearly canceled from the mixture in order to estimate a reference of the 
noise components $\s_2$. Second, a linear transform or adaptive filtering is 
used to subtract the 
noise from the mixture as much as possible; see, e.g., \cite{GSC, cancelation, 
hoshuyama, 
takahashi, gannot, postfilters}. 

Specifically, the cancelation of the target component is achieved through a 
matrix $\W$ such that
\begin{equation}\label{blocking}
	\W\HH_1={\bf 0}.
\end{equation}
Since $\HH_1$ has rank $m$, the maximum possible rank of $\W$ is $d-m$, which 
points to the fundamental limitation: The maximum dimension of the subspace 
spanned by the extracted noise signals $\W\x=\W\s_2$ is $d-m$.

Assume for now that any $(d-m)\times d$ matrix $\W$ having full row-rank has 
been identified (e.g., using BSS). To estimate $\s_2$, LS can be used (INV 
cannot be applied in the underdetermined case), so
\begin{equation}\label{est2}
	\widehat\s^2_{\rm LS}=\C{\bf W}^H({\bf W}{\bf C}{\bf W}^H)^{-1}{\bf W}\x,
\end{equation}
or
\begin{equation}\label{est3}
\widehat\bS^2_{\rm LS}=\widehat\C{\bf W}^H({\bf W}\widehat{\bf C}{\bf 
W}^H)^{-1}{\bf 
W}\X.
\end{equation}

\begin{theorem}\label{propMSE}
Let $\W$ be a $(d-m)\times d$ transform matrix having rank $d-m$ and satisfying 
(\ref{blocking}), and let ${\bf Q}$ denote a $d\times (d-m)$ matrix. Under 
\textsf{A2}, 
$\s^2_{\rm LS}$ is 
a minimizer of
\begin{equation}\label{min1}
\min_{\widehat\s={\bf Q}\W\x} {\rm E}\left[ \|\s_2-\widehat\s\|^2\right].
\end{equation}
Assuming (\ref{orthconst2}), $\bS^2_{\rm LS}$ is a minimizer of
\end{theorem}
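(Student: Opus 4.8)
The plan is to read the statement as the population and sample versions of a single fact: once the target component has been blocked by $\W$, the LS source-image estimator of $\s^2$ is nothing but the Wiener (linear minimum mean-square error) estimator of $\s_2$ from the noise reference $\W\x=\W\s_2$. The one structural observation that drives everything is that the blocking condition \eqref{blocking}, $\W\HH_1={\bf 0}$, together with the relevant decorrelation hypothesis, removes $\s_1$ not only from the reference but also from the cross-moment of $\s_2$ with that reference, so that $\C\W^H$ collapses onto $\C_{\s_2}\W^H$.

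First I would treat the population part. Since $\W\HH_1={\bf 0}$, \eqref{umodel} gives $\W\x=\W\s_2$, so \eqref{min1} is the classical problem of the best linear estimator of $\s_2$ from ${\bf z}\eqdef\W\s_2$. By the orthogonality principle a minimizing ${\bf Q}$ satisfies ${\rm E}[(\s_2-{\bf Q}{\bf z}){\bf z}^H]={\bf 0}$, i.e. ${\bf Q}\,{\rm E}[{\bf z}{\bf z}^H]={\rm E}[\s_2{\bf z}^H]$, hence the optimal estimate is $\widehat\s={\rm E}[\s_2{\bf z}^H]\,({\rm E}[{\bf z}{\bf z}^H])^{-1}{\bf z}$, provided ${\rm E}[{\bf z}{\bf z}^H]=\W\C_{\s_2}\W^H$ is nonsingular (which is implicit in \eqref{est2}). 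It then remains to rewrite the two moments via $\C$: under \textsf{A2}, ${\rm E}[\s_1\s_2^H]={\bf 0}$ yields both ${\rm E}[\s_2\x^H]=\C_{\s_2}$ and $\C=\HH_1\C_{\s_1}\HH_1^H+\C_{\s_2}$, and since $\HH_1^H\W^H=(\W\HH_1)^H={\bf 0}$ we get $\C\W^H=\C_{\s_2}\W^H={\rm E}[\s_2\x^H]\W^H$ and $\W\C\W^H=\W\C_{\s_2}\W^H={\rm E}[{\bf z}{\bf z}^H]$. Substituting gives $\widehat\s=\C\W^H(\W\C\W^H)^{-1}\W\x$, which is precisely $\widehat\s^2_{\rm LS}$ of \eqref{est2}.

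The sample part is the deterministic analogue of the same calculation, for the Frobenius problem $\min_{\widehat\bS={\bf Q}\W\X}\|\bS_2-\widehat\bS\|_F^2$. Again $\W\X=\W\bS_2$, so we minimize $\|\bS_2-{\bf Q}\W\bS_2\|_F^2$; the normal equations give ${\bf Q}\,\W\bS_2\bS_2^H\W^H=\bS_2\bS_2^H\W^H$, hence $\widehat\bS=\bS_2\bS_2^H\W^H(\W\bS_2\bS_2^H\W^H)^{-1}\W\bS_2$ (assuming $\W\bS_2$ full row rank, i.e. $\W\widehat\C\W^H$ nonsingular, as implicit in \eqref{est3}). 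Expanding $\widehat\C=\X\X^H/N$ and invoking \eqref{orthconst2}, $\bS_1\bS_2^H/N={\bf 0}$, the cross terms drop, so $\widehat\C=\HH_1(\bS_1\bS_1^H/N)\HH_1^H+\bS_2\bS_2^H/N$; then $\HH_1^H\W^H={\bf 0}$ forces $\widehat\C\W^H=\bS_2\bS_2^H\W^H/N$ and $\W\widehat\C\W^H=\W\bS_2\bS_2^H\W^H/N$. Plugging into \eqref{est3} with $\W\X=\W\bS_2$ reproduces exactly the minimizer above, so $\widehat\bS^2_{\rm LS}$ is a minimizer.

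All of this is routine linear algebra; the only thing that needs care — and the reason the full-row-rank hypothesis on $\W$ is present — is guaranteeing that $\W\C_{\s_2}\W^H$ and $\W\bS_2\bS_2^H\W^H$ are invertible, so that the inverses in \eqref{est2}–\eqref{est3} exist. I would also stress that \textsf{A2} cannot be weakened to its sample analogue in the population part, and conversely \textsf{A2} alone does not suffice for the sample part: it is exactly the vanishing of the corresponding (population or sample) cross-moment of $\s_1$ and $\s_2$ that collapses $\C\W^H$ onto $\C_{\s_2}\W^H$ (resp.\ $\widehat\C\W^H$ onto its sample counterpart), which is the crux of the identification with LS.
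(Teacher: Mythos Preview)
Your argument is correct. You compute the Wiener (LMMSE) solution of \eqref{min1} directly via the orthogonality principle and then show that, under the blocking condition \eqref{blocking} together with the appropriate decorrelation hypothesis, the relevant second moments collapse ($\C\W^H=\C_{\s_2}\W^H$, $\W\C\W^H=\W\C_{\s_2}\W^H$, and their sample analogues), so that the Wiener formula coincides with the LS expression \eqref{est2} (resp.\ \eqref{est3}).

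The paper takes a shorter, more conceptual route: rather than computing the minimizer of \eqref{min1}, it observes that for any admissible $\widehat\s={\bf Q}\W\x={\bf Q}\W\s_2$ the cross term between $\x-\s_2=\HH_1\s_1$ and $\s_2-\widehat\s$ has zero expectation under \textsf{A2}, so ${\rm E}[\|\x-\widehat\s\|^2]$ and ${\rm E}[\|\s_2-\widehat\s\|^2]$ differ by the constant ${\rm E}[\|\HH_1\s_1\|^2]$ over the feasible set and hence share minimizers; the former is minimized by $\widehat\s^2_{\rm LS}$ by definition \eqref{LSsymb}, and the sample statement follows identically from \eqref{orthconst2} and \eqref{LS}. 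Your approach has the merit of making the invertibility requirement on $\W\C\W^H$ (resp.\ $\W\widehat\C\W^H$) explicit and of identifying $\widehat\s^2_{\rm LS}$ as the Wiener filter of $\s_2$ from the noise reference $\W\s_2$, which is informative in the underdetermined setting; the paper's argument buys brevity and avoids any explicit manipulation of $\C$ or $\widehat\C$.
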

\begin{equation}\label{min2}
\min_{\widehat\bS={\bf Q}\W\X} \|\bS_2-\widehat\bS\|_F^2.
\end{equation}

\begin{proof}
Under \textsf{A2} it holds that
\[
\min_{\widehat\s={\bf Q}\W\x} {\rm E}\left[ \|\s_2-\widehat\s\|^2\right]=
\min_{\widehat\s={\bf Q}\W\x} {\rm E}\left[ \|\x-\widehat\s\|^2\right].
\]
When (\ref{orthconst2}) holds, than
\[
\min_{\widehat\bS={\bf Q}\W\x} \|\bS_2-\widehat\bS\|_F^2=
\min_{\widehat\bS={\bf Q}\W\x} \|\X-\widehat\bS\|_F^2.
\]
The statements of the proposition follow, respectively, by the definitions 
\eqref{LSsymb} and \eqref{LS}.
\end{proof}

The latter proposition points to important limitations of LS 
that should be taken into account in the 
underdetermined scenario. First, LS 
estimates a $d$-dimensional signal only from a $(d-m)$-dimensional signal 
subspace. Second, \eqref{est2} is optimal under \textsf{A2} in the 
least-squares sense of \eqref{min1}. Third, \eqref{est3} is optimal in the 
sense of \eqref{min2} only when (\ref{orthconst2}) is valid, which is much 
stronger assumption than \textsf{A2}.

\section{Simulations}
This section is devoted to extensive Monte Carlo simulations where the signals and system parameters are randomly generated. Real and complex parts of random numbers are always generated independently according to the Gaussian law with zero mean and unit variance. Each trial of a simulation consists of the following steps.
\begin{enumerate}
  \item The dimension parameters $d$ and $m$ are chosen.
	\item 
	$N$ samples of the original components $\s_1$ and $\s_2$ are randomly 
	generated according to the Gaussian law. 
	\item The mixing matrix $\HH$ is generated, $\W=\HH^{-1}$, $\X=\HH\bS$, and 
	$\widehat\C=\X\X^H/N$.
	\item The estimation of $\W$ is simulated by adding random perturbations to 
	its blocks, that is, $\widehat\W_1=\W_1+\bXi_1$ and 
	$\widehat\W_2=\W_2+\bXi_2$, where the elements of $\bXi_1$ and $\bXi_2$ 
	have, respectively, variances $\lambda_1^2$ and $\lambda_2^2$; 
	$\widehat\W=[\widehat\W_1\,\widehat\W_2]$\footnote{Note that $\widehat\W_1$ 
	and $\widehat\W_2$ are not constrained to satisfy \textsf{A3}. Otherwise, 
	the comparison of INV and LS would not give a sense due to Proposition~5.}. 
	Then, $\widehat\W_1$ and $\widehat\W_2$ are multiplied by random regular 
	scaling matrices of corresponding dimensions.
	\item The accuracy of the INV and LS estimates of $\bS^1$ using 
	$\widehat\W$ is evaluated through the normalized mean-squared error defined 
	as
	\begin{equation}\label{NMSE}
		{\rm NMSE}_j=\frac{\|\HH_1\W_1-{\bf T}_j\|_F^2}{\|\HH_1\W_1\|_F^2},
	\end{equation}
	$j=1,2$.
\end{enumerate}

The following subsection reports results of simulations assuming the determined model. The next subsection considers the underdetermined model (\ref{umodel}).

\subsection{Determined model}

\subsubsection{Influence of the Estimation Errors in $\widehat\W$}
\begin{figure}
\centering
\includegraphics[width=\linewidth]{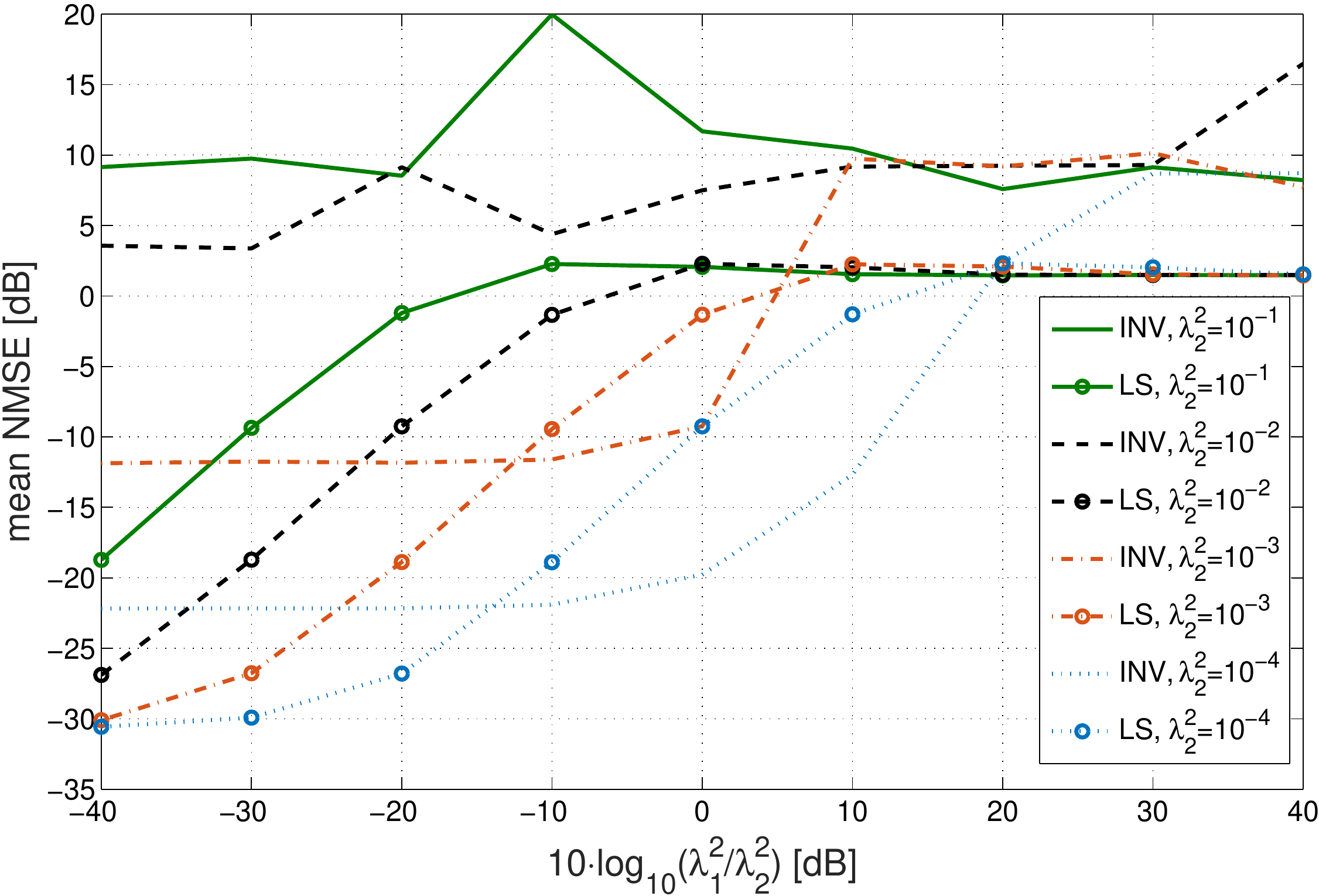}
\caption{\label{art1} NMSE averaged over $10^5$ trials where $d=5$, $m=2$,  
$N=10^4$, $\lambda_2^2$ is fixed, and $\lambda_1^2$ is varied.}
\end{figure}

The experiment is done with $d=5$, $m=2$, and $N=10^4$; $\lambda_2^2$ is equal 
to one of four constants ($10^{-1}$, $10^{-2}$, $10^{-3}$, and $10^{-4}$) while 
$\lambda_1^2$ is varied. Each simulation is repeated in $10^5$ trials. The 
average NMSE achieved by INV and LS are shown in Fig.~\ref{art1}. 

The results of INV are highly influenced by $\lambda_2^2$ that controls the 
perturbation of $\widehat\W_2$. For example, for $\lambda_2^2=10^{-1}$ and 
$\lambda_2^2=10^{-2}$, INV fails in the sense that the achieved NMSE is above 
$0$~dB. This happens even if $\lambda_1^2$, which controls the perturbation of 
$\widehat\W_1$, is relatively ``small''. For $\lambda_2^2=10^{-3}$ and 
$\lambda_2^2=10^{-4}$, the NMSE of INV decreases  with decreasing 
$\lambda_1^2$. However, the NMSE is lower bounded (does not improve as 
$\lambda_1^2\rightarrow 0$). All these results point to the dependency of INV 
on $\widehat\W_2$.

The NMSE of LS depends purely on $\lambda_1^2$. It is always improved with the 
decreasing value of $\lambda_1^2$ (it is only limited by the length of data 
which influences the accuracy of the sample covariance matrix $\widehat\C$). In 
this experiment, LS is outperformed by INV only in a few cases, namely, when 
$\lambda_2^2=10^{-4}$ and $\lambda_1^2/\lambda_2^2$ is higher than $-14$~dB. 
INV thus appears to be beneficial compared to LS in situations where the whole 
$\widehat\W$ is a sufficiently accurate estimate of $\W$.

\subsubsection{Varying Dimension}
\begin{figure}
\centering
\includegraphics[width=\linewidth]{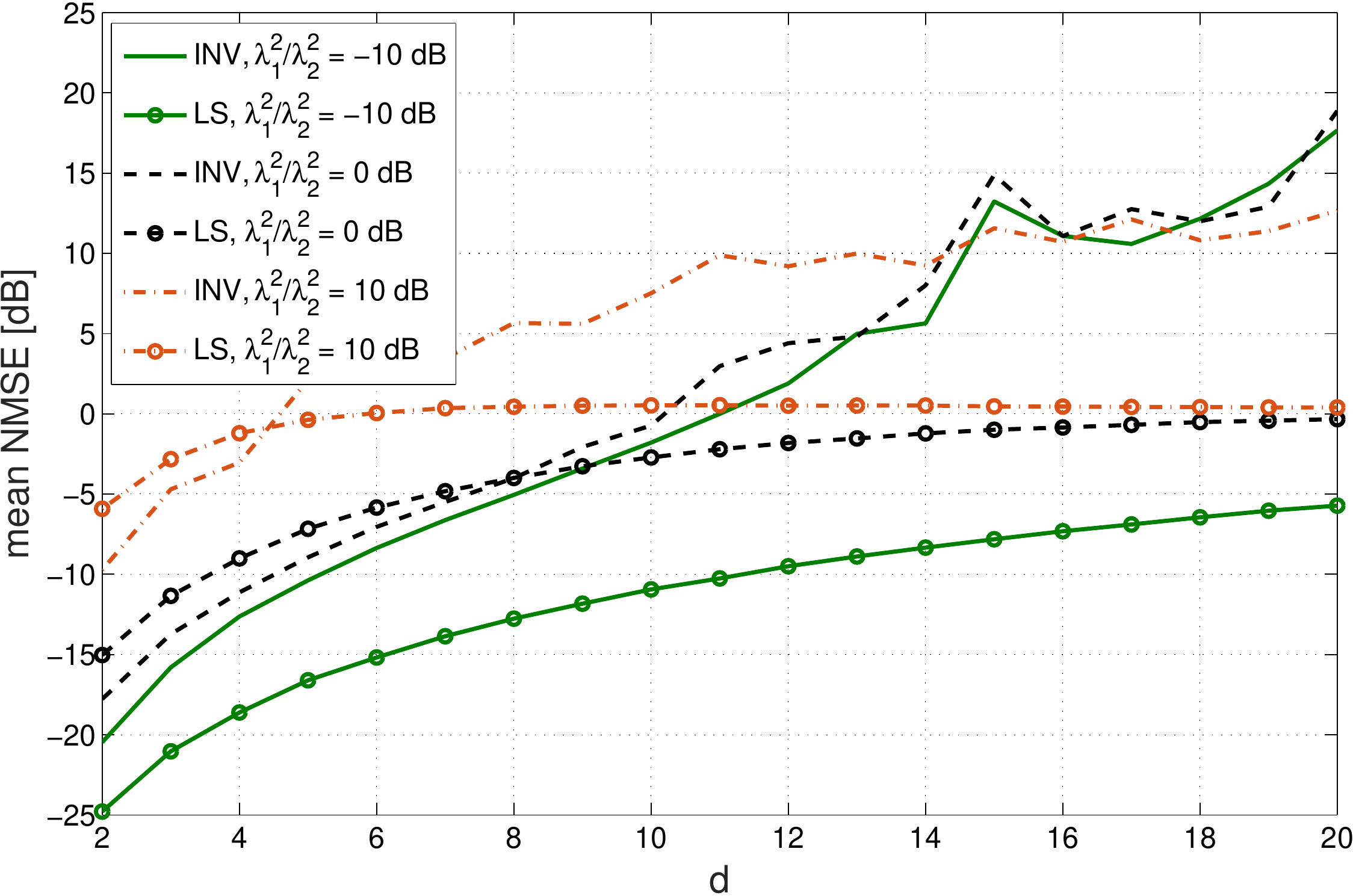}
\caption{\label{art3} NMSE averaged over $10^5$ trials as a function of 
$d=2,\dots,20$; here $m=1$, $\lambda_2^2=10^{-3}$, and $N=10^4$.}
\end{figure}

In the situation here, the target component $\s_1$ has dimension one, i.e., 
$m=1$, while the dimension of the mixture $d$ is changed from $2$ through $20$; 
$N=10^4$. The variances $\lambda_1^2$ and $\lambda_2^2$ are fixed, namely, 
$\lambda_2^2=10^{-3}$ and $\lambda_1^2$ is chosen such that 
$10\log_{10}\lambda_1^2/\lambda_2^2$ corresponds, respectively, to $-10$, $0$, 
and 
$10$~dB. The NMSE averaged over $10^5$ trials is shown in Fig.~\ref{art3}.

The NMSE values of both methods are increasing with growing $d$. In the INV case, the NMSE grows smoothly until it reaches a certain threshold value of $d$. The experiments show that this threshold depends on $\lambda_1^2$ and $\lambda_2^2$. Above this threshold, the NMSE of INV abruptly grows. It points to a higher sensitivity of INV to the estimation errors in $\widehat\W$ when the dimension of data is ``high''. 

LS yields smooth and monotonic behavior of NMSE for every $d$. It is 
outperformed by INV if both $\lambda_1^2$  and $\lambda_2^2$ as well as the 
data dimension $d$ are sufficiently small. 

\subsubsection{Target Component Dimension}
\begin{figure}
\centering
\includegraphics[width=\linewidth]{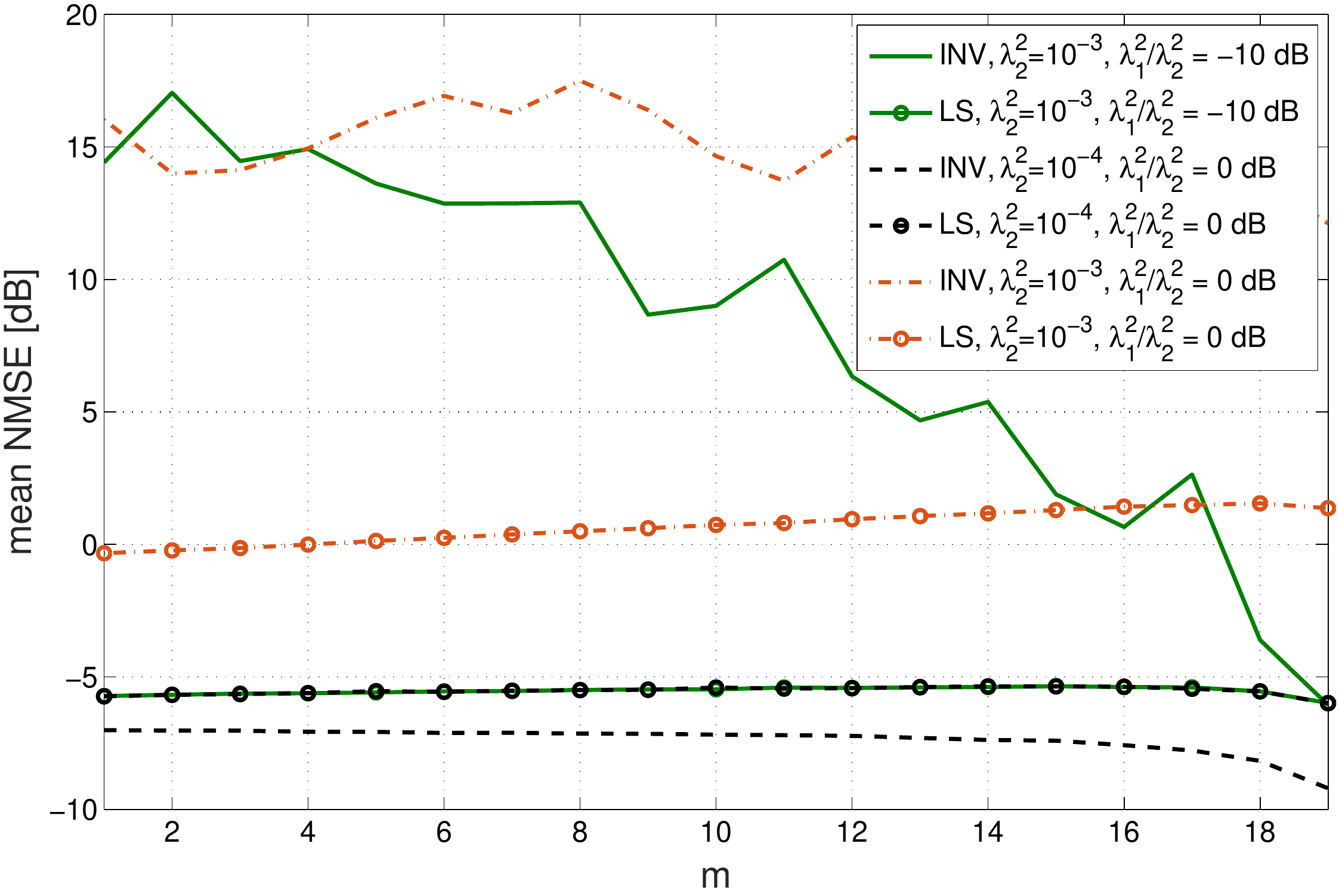}
\caption{\label{art2} NMSE averaged over $10^5$ trials where $d=20$, $N=10^4$, 
and $m=1,\dots,19$.}
\end{figure}

The dimension of the mixture $d$ is now put equal to $20$, while the dimension of the target component $m$ is varied from $1$ through $d-1$. Results for three different choices of $\lambda_1^2$ and $\lambda_2^2$ are shown in Fig.~\ref{art2}. The scenario with $\lambda_1^2=\lambda_2^2=10^{-3}$ appears to be difficult for both methods as they do not achieve NMSE below $0$~dB. 
INV also fails when $\lambda_2^2=10^{-3}$ and $\lambda_1^2/\lambda_2^2$ 
corresponds to $-10$~dB (i.e., $\lambda_1^2=10^{-4}$) until $m\leq 17$. This is 
in accordance with the results of the previous example that shows that INV 
fails when $\lambda_1^2$, $\lambda_2^2$ and $d$ are ``too large''. The example 
here reveals one more detail: INV can benefit from smaller perturbations of the 
target component ($\lambda_1^2=10^{-4}$) even if $\lambda_2^2$ is larger, but 
the target dimension must be large enough with respect to $d$.

LS performs independently of $\lambda_2^2$, which is confirmed by the cases 
that are plotted with solid and dashed lines in Fig.~\ref{art2}: these lines 
coincide as both correspond to the same $\lambda_1^2$ (although different 
$\lambda_2^2$). LS is outperformed by INV when 
$\lambda_1^2=\lambda_2^2=10^{-4}$, which, again, occurs when the estimation 
error of the whole $\widehat\W$ is very small.

\subsection{Underdetermined model}\label{underexpsection}
\begin{figure}
\centering
\includegraphics[width=\linewidth]{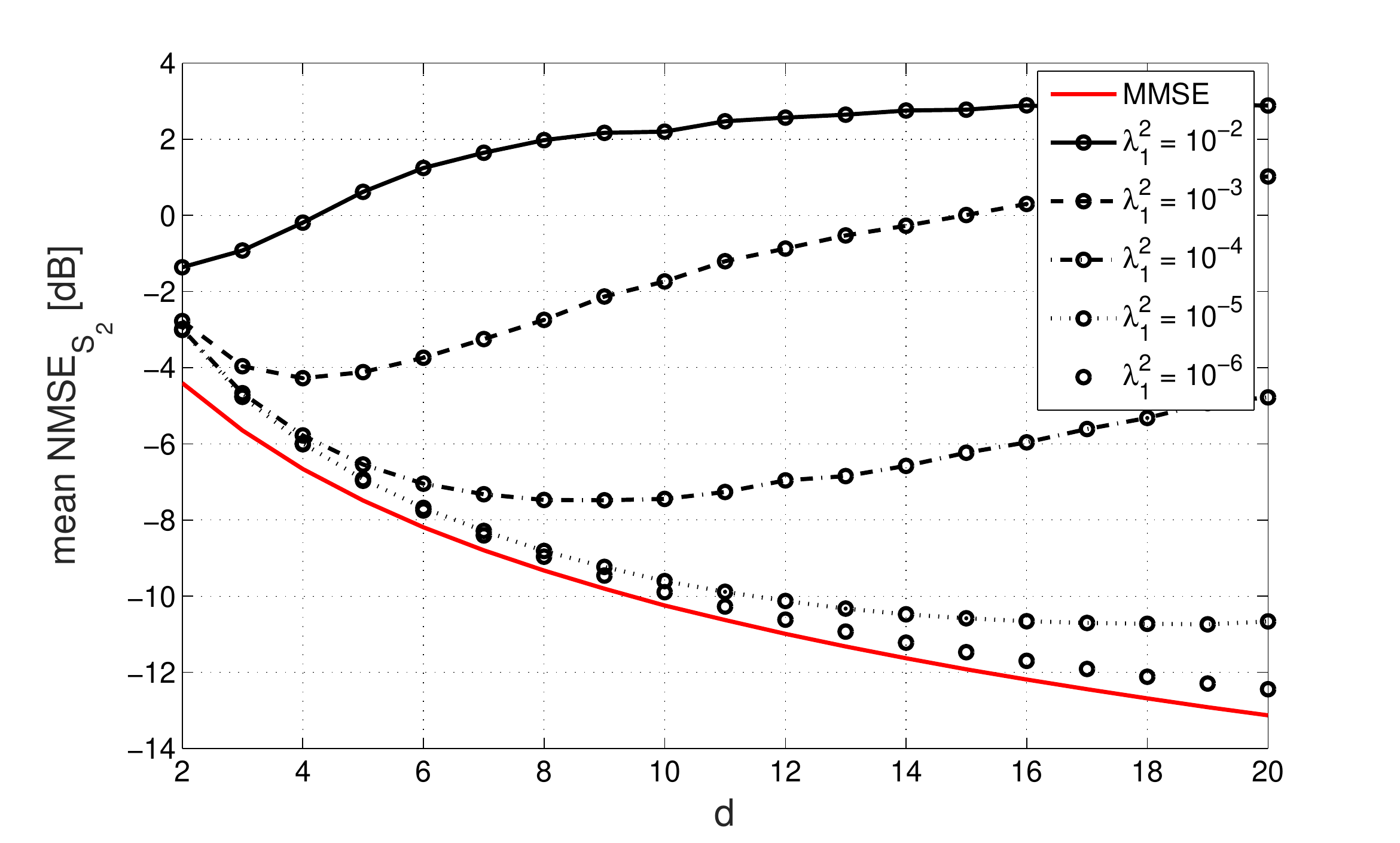}
\caption{\label{art4} Average NMSE$_{\bS_2}$ as a function of $d$ achieved by 
(\ref{est2}) in an experiment with the underdetermined 
model (\ref{umodel}); $m=1$; $N=10^4$. MMSE denotes the NMSE achieved by the 
optimum minimum mean-squared error solution (\ref{LSund}). The signals are 
generated as random complex Gaussian i.i.d.}
\end{figure}

\begin{figure}
\centering
\includegraphics[width=\linewidth]{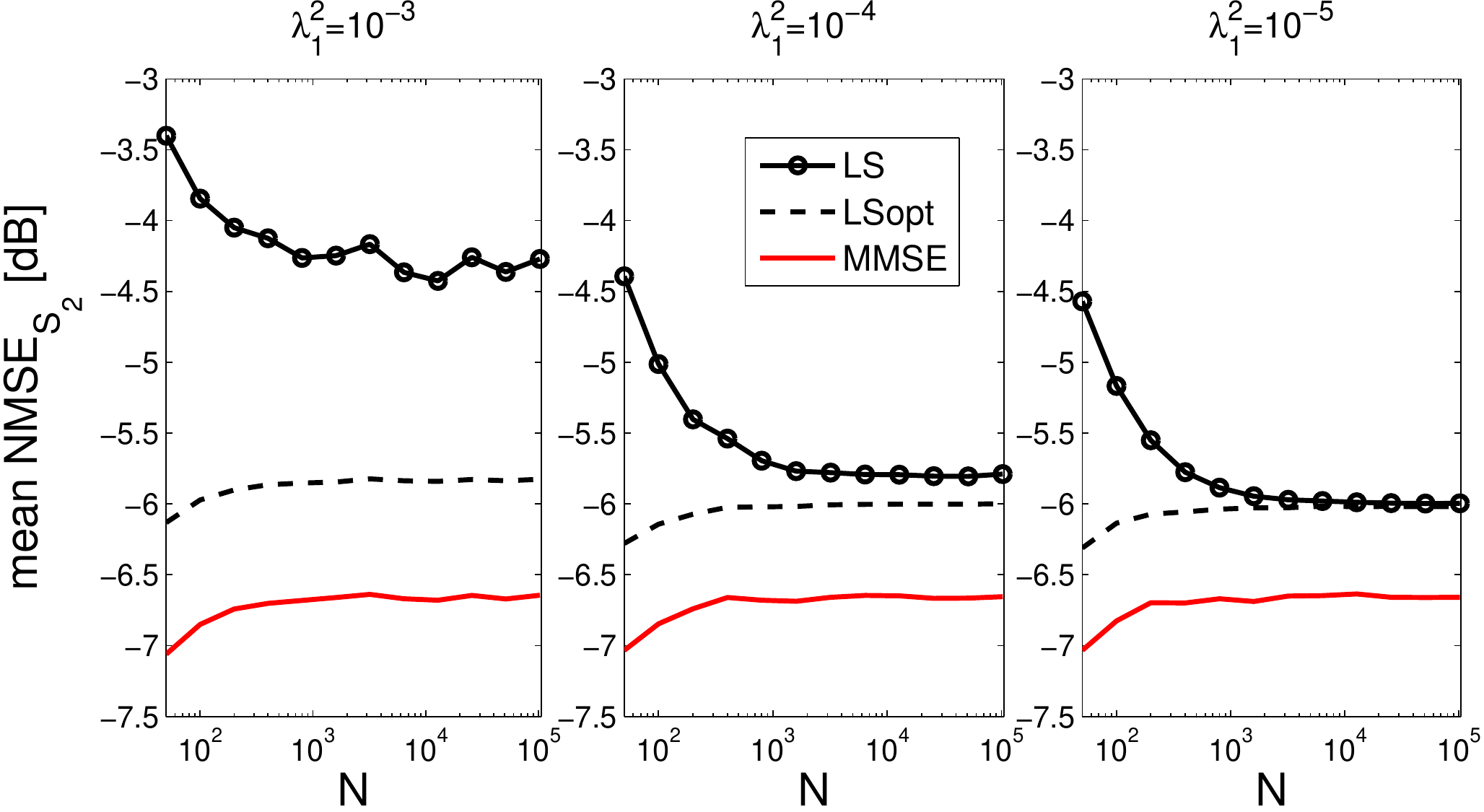}
\caption{\label{art4N} Average NMSE$_{\bS_2}$ as a function of $N$ achieved 
through	(\ref{est3}), \eqref{min2} and \eqref{LS}; $d=4$; $\lambda_1^2=10^k$, 
$k=-3,-4,-5$.}
\end{figure}

In the example of this subsection, we consider the underdetermined mixture 
model (\ref{umodel}) where $m=1$, $d=2,\dots,20$, and $N=50,\dots,10^5$. The 
goal is to 
examine the reconstruction of the noise components $\bS_2$ through 
(\ref{est3}). $\HH_1$ is randomly generated. Then, $\W$ is such that its rows 
form a basis of the $(d-m)$-dimensional subspace that is orthogonal to $\HH_1$ 
plus a random Gaussian perturbation matrix whose elements have the variance 
values equal to $\lambda_1^2=10^k$, $k=-2,-3,\dots,-6$, respectively. After 
applying (\ref{est3}), the evaluation is done using the normalized mean square 
distance 
	\begin{equation}\label{NMSEs}
		{\rm NMSE_{\bS_2}}=\frac{\|\bS_2-\widehat\bS_2\|_F^2}{\|\bS_2\|_F^2}.
	\end{equation}

Owing to the statement of Proposition~\ref{propMSE}, it is worth comparing 
$\widehat\bS_2$ with the exact solution of \eqref{min2}, which will be 
abbreviated by LSopt, and with the minimum mean square error solution, marked 
as MMSE, defined as the minimizer of
\begin{equation}\label{LSund}
	\min_{{\bf Q}\in\mathcal{C}_{d\times d}}\|\bS_2-{\bf Q}\X\|_F^2.
\end{equation}
The latter gives the minimum achievable value of NMSE$_{\s_2}$ by a linear 
estimator; cf. \eqref{min2} and \eqref{LS}.

The results averaged over $10^3$ independent trials are shown in 
Figures~\ref{art4} and \ref{art4N}. Fig.~\ref{art4} shows results for $N=10^4$. 
One observation here is that NMSE$_{\bS_2}$ achieved through LS is getting 
closer to that 
of MMSE as $\lambda_1^2$ approaches zero. Next, NMSE$_{\bS_2}$ improves with 
growing dimension $d$, but it appears that it stops improving at a certain $d$ 
and grows beyond this threshold value, which depends on $\lambda_1^2$. For 
example, when $\lambda_1^2=10^{-4}$, the NMSE$_{\bS_2}$ is decaying until $d=8$ 
and grows beyond $d\geq 10$. 

Fig.~\ref{art4N} shows NMSE$_{\bS_2}$ as a function of $N$ when $d=4$. This 
detailed observation shows that LS approaches LSopt as $N$ grows and 
$\lambda_1^2$ approaches zero, but does not achieve the performance of MMSE. 
This is the fundamental limitation due to the dimension of the separable signal 
subspace, that is, $d-m$.

\section{Practical Examples}

\subsection{De-noising of Electrocardiogram}
Fig.~\ref{ekgmix} shows two seconds of a recording from a three channel 
electrocardiogram (ECG) of a Holter monitor, which was sampled at $500$~Hz. The 
recording is strongly interfered with  a noise signal originating from the 
Holter display. The fundamental frequency of the noise is about $37$~Hz, and 
the noise contains several harmonics.

Since the noise is significantly stronger than the ECG components, Principal 
Component Analysis (PCA) can be used to find a demixing transform that 
separates the noise from the mixture. Therefore, we take the eigenvector 
corresponding to the highest eigenvalue of the covariance matrix of the 
recorded data (the principal vector) as the separating transform. Then, the 
noise responses on the electrodes are computed using LS and subtracted from the 
original noisy recording. This approach is computationally cheaper that doing 
the whole PCA and using INV then. According to Proposition~5, both 
approaches give the same result as PCA yields component that are exactly 
orthogonal.

To compare, we repeated the same experiment using the vector obtained through 
Independent Component Analysis (ICA). One-unit FastICA \cite{fastica} with 
$\tanh(\cdot)$ nonlinearity was used to compute the vector separating the noise 
component. To avoid the permutation ambiguity, the algorithm was initialized 
from $[1\,1\,1]$, because the noise appears to be uniformly distributed over 
the electrodes. Also here the approach is faster than doing the whole 
orthogonally-constrained ICA (e.g. using Symmetric FastICA) and using INV. 

Figures \ref{ekgPCA} and \ref{ekgICA} show the resulting signals where the 
estimated images of the noise component were removed, respectively, through PCA 
and ICA. Both results show very efficient subtraction of the noise. A visual 
inspection of the detail in Fig.~\ref{ekgPCA} shows certain residual noise that 
does not appear in Fig.~\ref{ekgICA}, so the separation through ICA appears to 
be more accurate than by PCA. Combining one-unit ICA algorithm with LS, the 
computational complexity of the ICA solution is decreased.

\begin{figure}
	\centering
	\includegraphics[width=0.95\linewidth]{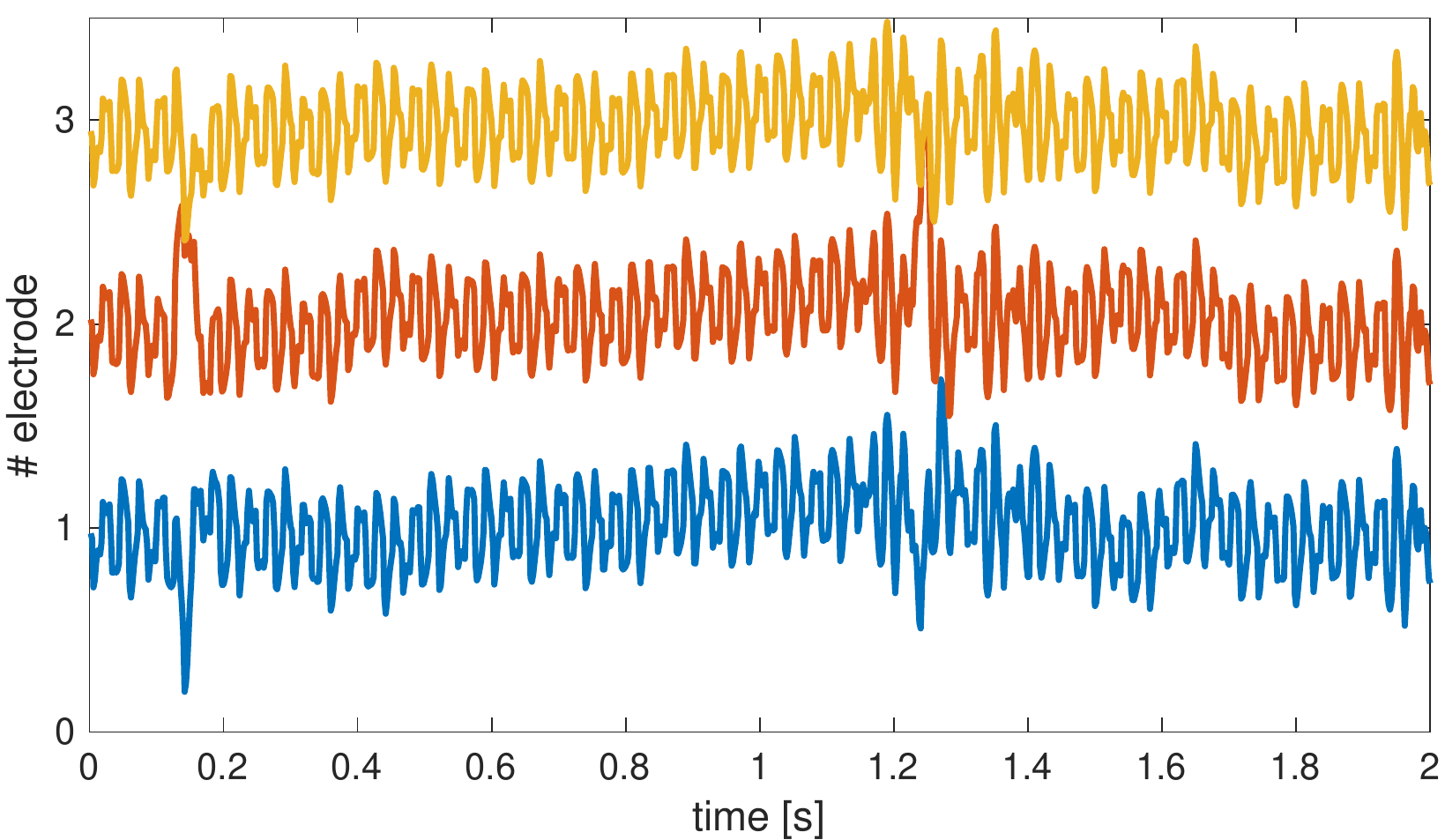}
	\caption{\label{ekgmix} A two-second sample of a three channel 
		electrocardiogram interfered by a noise signal originating from a 
		Holter 
		display.}
\end{figure}

\begin{figure}
	\centering
	\includegraphics[width=0.95\linewidth]{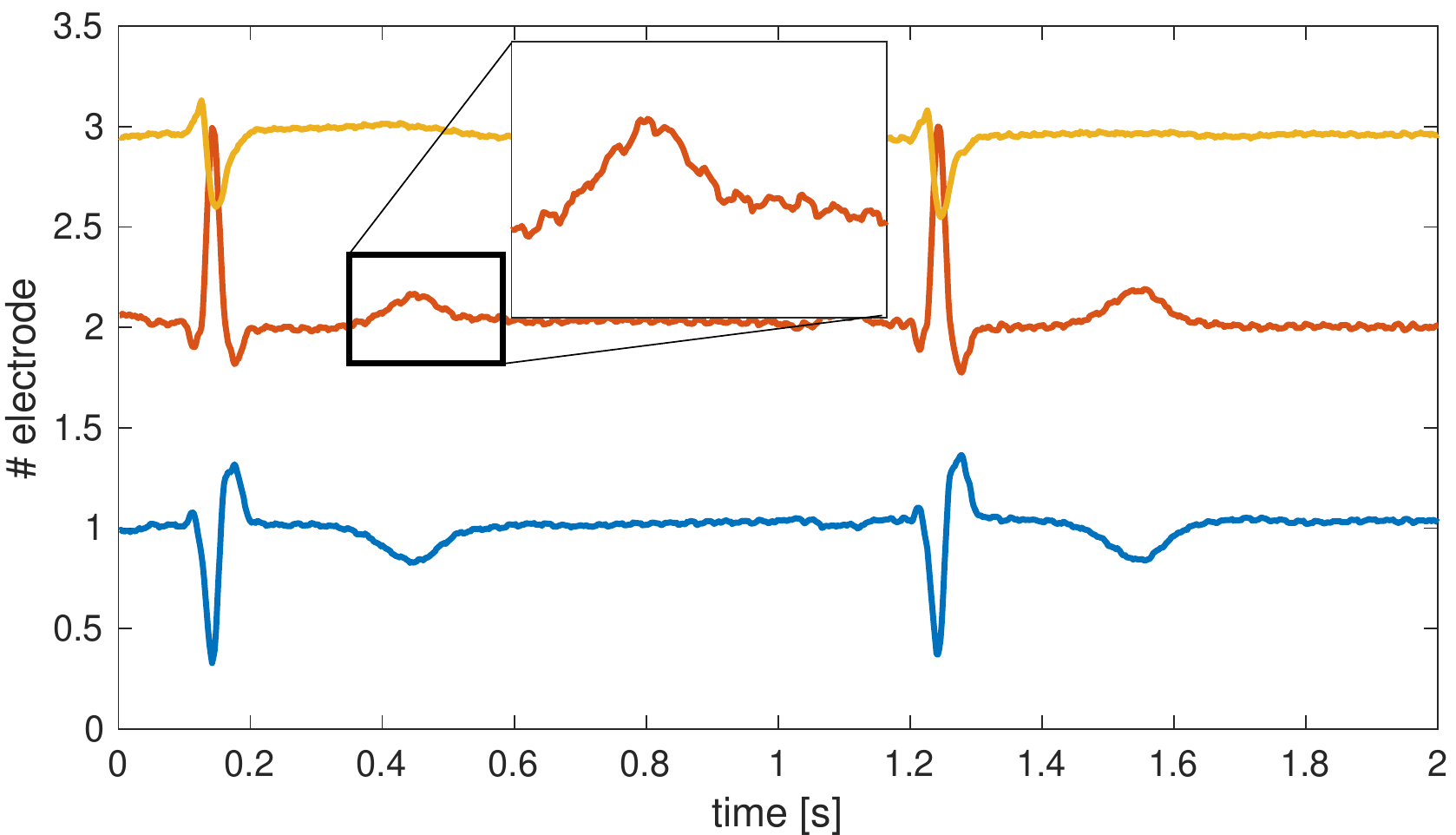}
	\caption{\label{ekgPCA} Cleaned data from Fig.~\ref{ekgmix} after the 
		subtraction of noise responses that were estimated through the main 
		principal 
		component and LS.}
\end{figure}

\begin{figure}
	\centering
	\includegraphics[width=0.95\linewidth]{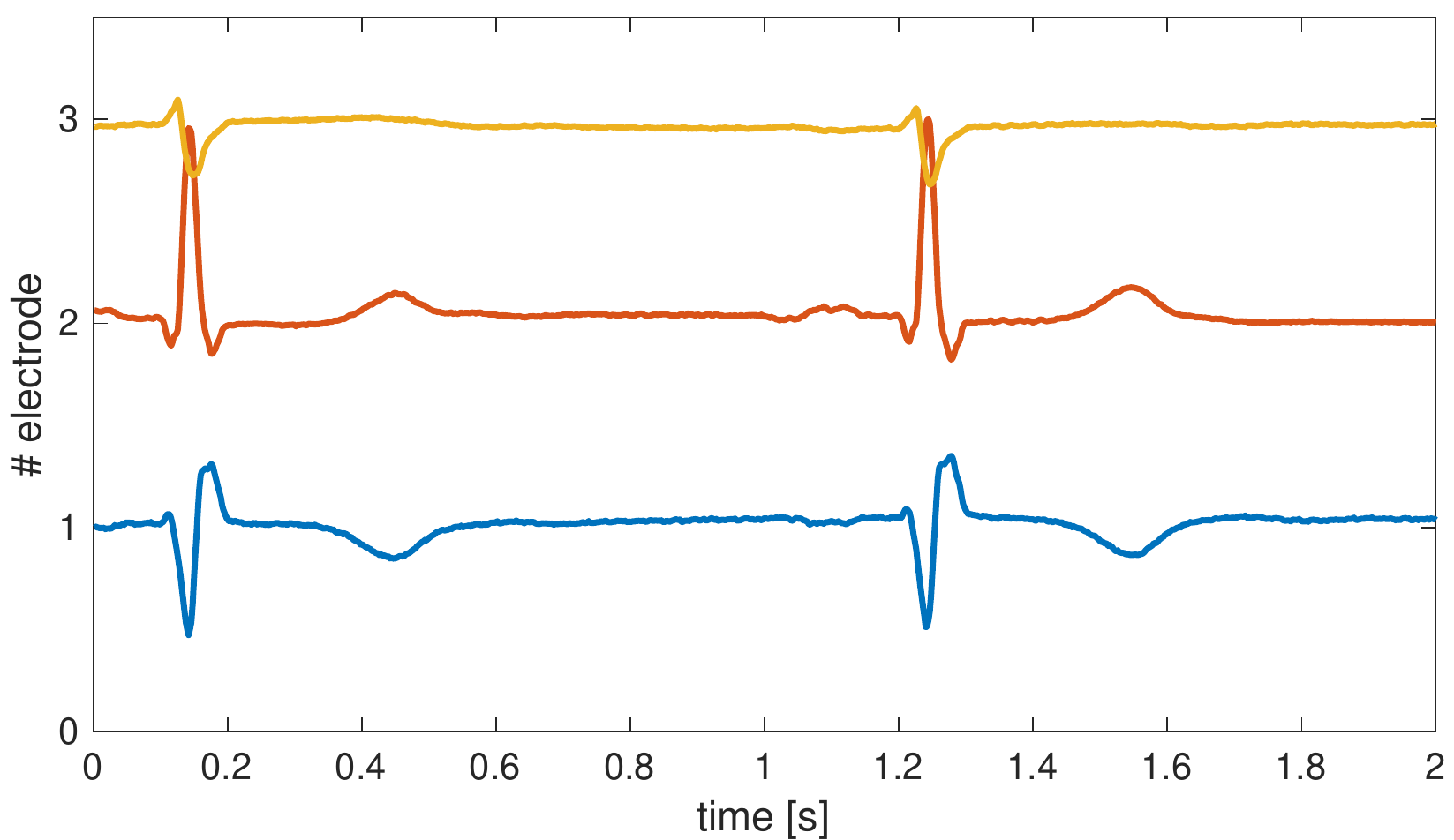}
	\caption{\label{ekgICA} Cleaned data from Fig.~\ref{ekgmix} after the 
		subtraction of noise images that were estimated using the one-unit 
		FastICA and 
		LS.}
\end{figure}

\subsection{Blind Separation with Incomplete Demixing 
Transforms}\label{IDTsection}
\begin{figure}
	\centering
	\includegraphics[width=\linewidth]{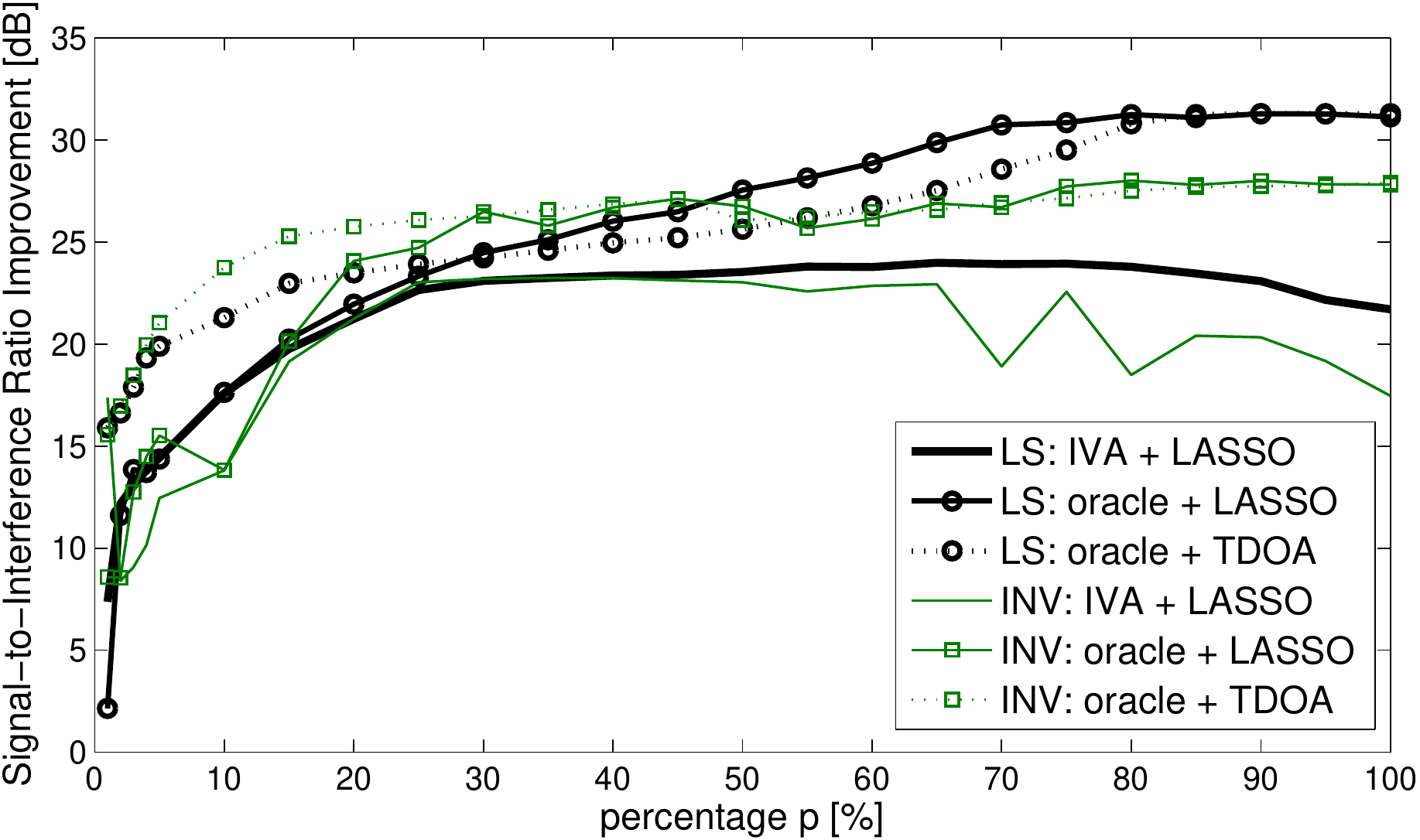}
	\caption{\label{IDTspeech} Improvement of signal-to-interference ratio as a 
		function of $p$, i.e., of percents of selected active frequencies in 
		$\mathcal{S}$ for the estimation of incomplete demixing transform. The 
		evaluation was performed on speech signals.}
\end{figure}
\begin{figure}
	\centering
	\includegraphics[width=\linewidth]{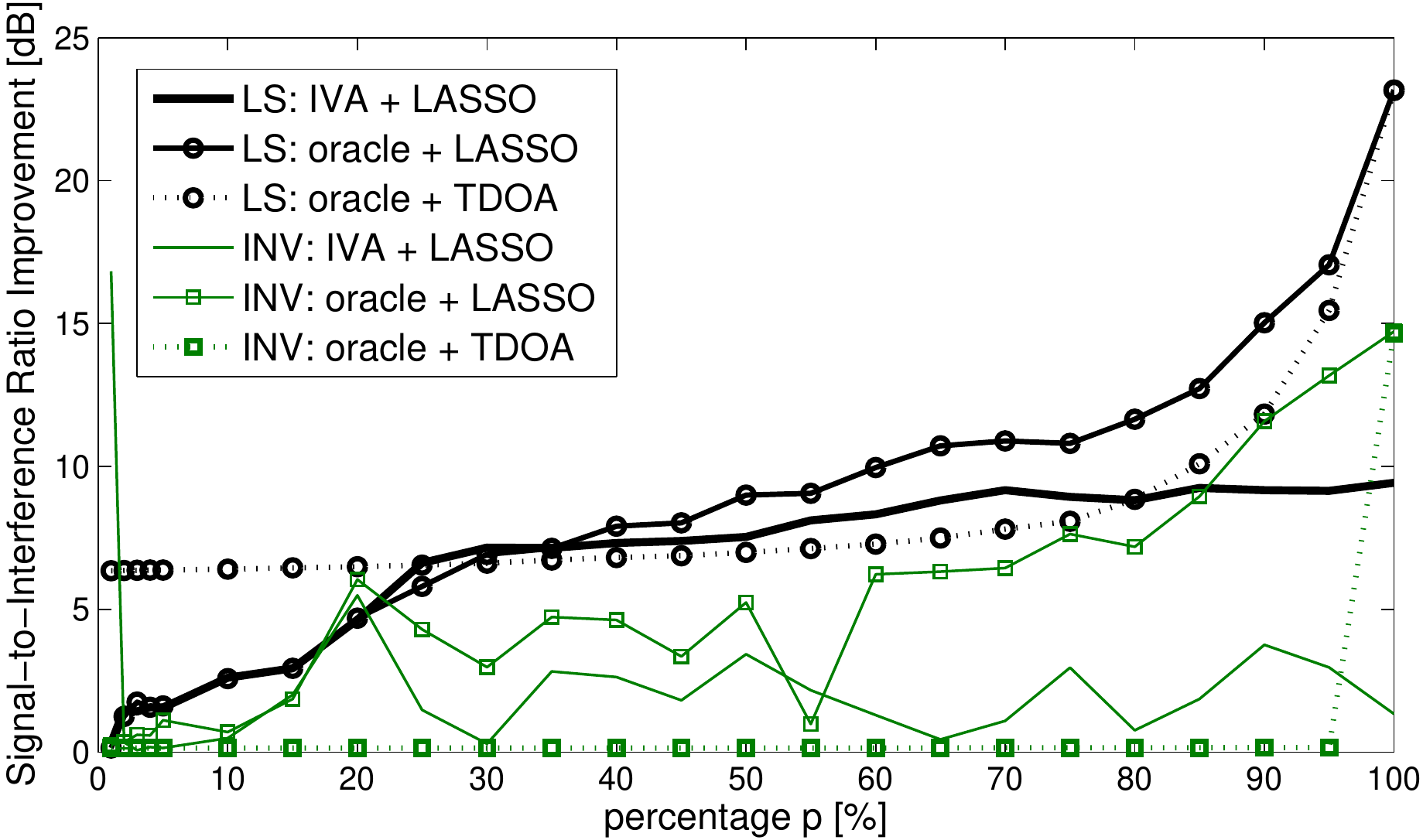}
	\caption{\label{IDT} Improvement of signal-to-interference ratio as a 
	function of $p$. The evaluation was performed on white noise signals that 
	uniformly excite the whole frequency range.}
\end{figure}

Proposition~5 points to the fact that if BSS is based on a method that yields 
(almost) orthogonal estimates of $\s_1$ and $\s_2$, INV and LS are principally 
not that different. The example of this section demonstrates a situation where 
the estimates are significantly nonorthogonal, so INV and LS yield considerably 
different results.

Recently, a novel approach for blind separation of convolutive mixtures of 
audio signals has been proposed in \cite{eusipco2016,iwaenc2016}. The idea 
resides in the application of IVA in the frequency-domain on a constrained 
subset of frequencies where the input (mixed) signals are active. This is in 
contrast with the conventional Frequency-Domain IVA (ICA), which is applied in 
all frequencies. The motivation behind is threefold: computational savings, 
improved accuracy (especially in environments with sparse room impulse 
responses), and the prediction of complete demixing transform for separation of 
future signals whose activity appears in other frequencies.

The method proceeds in three main steps. First, the subset $\mathcal{S}$ of $p$ 
percents of the most active frequency bins is selected. This can be done 
through estimating the power spectrum of the signal on a reference microphone  
using the coefficients of its short-term Fourier transform. The frequency bins 
with maximum average magnitude of Fourier coefficients are selected. Second, an 
IVA method is applied that estimates the demixing matrices within the subset of 
the selected frequencies. The subset of the matrices is referred to as 
Incomplete Demixing Transform (IDT). Third, the IDT is completed by a given 
method.

We consider the same experiment with two simultaneously 
speaking persons and two microphones as in \cite{eusipco2016}. Signals have 
10 seconds in length; the sampling frequency is 16~kHz. The 
signals are convolved with room impulse responses (RIR) generated by a 
simulator and mixed together. Reflection order is set to 1 so that the RIRs are 
significantly sparse (the results of this experiment with reflection order 10 
when RIRs are no more sparse are available in \cite{eusipco2016}). Then, the 
signals are transformed 
into the short-term Fourier domain with the window length of $1024$ samples and 
shift $128$. The convolution is, in the frequency domain, approximated by the 
set of multiplicative models \eqref{model} where $d=r=2$ where one model 
corresponds to one frequency bin; there are $513$ models in total. 

As for the second step, the demixing matrices are estimated 
from the mixed signals 
using the natural gradient algorithm for IVA \cite{iva} applied to the subset 
of models \eqref{model}.  To compare, 
``oracle'' demixing matrices are derived on $\mathcal{S}$ using known responses 
of the speakers. This gives the IDT that is known only on the selected subset 
$\mathcal{S}$.

The IDT is completed by two alternative methods. The first method, denoted as 
TDOA, utilizes known time-differences of arrival of the signals. The unknown 
demixing matrices are such that their rows correspond to the null beamformer 
steering 
spatial null towards the unwanted speaker. The second approach, denoted as 
LASSO, completes the IDT through finding the sparsest representations of 
incomplete relative transfer functions (RTF) that are derived from the 
IDT\footnote{As pointed in \cite{eusipco2016}, LASSO could be seen as a 
generalization of TDOA, because impulse responses corresponding to null 
beamformers are pure-delay filters, which are perfectly sparse.}. Let ${\bf q}$ 
denote an $|\mathcal{S}|\times 1$ vector that collects the 
coefficients of an incomplete RTF; $|\mathcal{S}|$ 
is the number of elements in $\mathcal{S}$. The completed RTF is obtained as 
the solution of \cite{tibshirani}
\begin{equation}\label{LASSO}
\arg\min_{{\bf h}} \|{\bf h}_{\mathcal{S}}-{\bf q}\|^2+  \epsilon\|{\bf 
F}^H{\bf h}\|_1,
\end{equation}
where $\epsilon>0$ controls the time-domain sparsity of the solution, ${\bf F}$ 
is the matrix of the DFT, the subscript $(\cdot)_{\mathcal{S}}$ denotes a 
vector/matrix with selected elements/rows whose indices are in $\mathcal{S}$, 
and $\|\cdot\|_1$ denotes the $\ell_1$-norm.


Now, it is worth noting that the separated components by the demixing matrices 
after the completion can be significantly nonorthogonal. While the IVA applied 
within $\mathcal{S}$ aims to find 
independent (thus ``almost'' or fully orthogonal) components, 
the method for the IDT completion does not take any regard to the 
orthogonality\footnote{The orthogonal constraint cannot be imposed within the 
frequencies outside of the set $\mathcal{S}$, because signals are not (or 
purely) active there.}.

Figures \ref{IDTspeech} and \ref{IDT} show results of the experiment from 
\cite{eusipco2016} evaluated in terms of the Signal-to-Interference Ratio 
Improvement (SIR) after the signals are separated as a function of $p$ (the 
percentage of frequencies in $\mathcal{S}$). 
In Fig.~\ref{IDTspeech}, the evaluation is performed with the speech signals, 
while Fig.~\ref{IDT} shows the results achieved as if the sources were white 
Gaussian sequences. The purpose of the latter evaluation is to 
evaluate the completed IDT uniformly over the whole frequency range, i.e., also 
in frequencies that were not excited by the speech signals. Note that SIR must 
be evaluated after resolving the scaling ambiguity in each frequency 
\cite{bsseval}. This gives us the opportunity to apply either INV or LS.

The results in Figures \ref{IDTspeech} and \ref{IDT} point to significant 
differences between LS and INV in this evaluation. The results by LS appear to 
be less biased and stable as compared to those by INV, and can be interpreted 
in accord with the theory. In particular, LS shows that oracle+LASSO (oracle 
IDT completed by LASSO) outperforms oracle+TDOA for $p$ between 35\% and 80\%. 
This gives sense, because LASSO can better exploit the sparsity of the RIRs 
generated in this experiment. The results by INV do not reveal this important 
fact. Next, LS shows in Fig.~\ref{IDTspeech} that IVA+LASSO can improve the 
separation of the speech signals when $p<100$\%. The evaluation on white noise 
in Fig.~\ref{IDT} shows that the loss of SIR is not essential until $p<30$\%. 
The latter conclusion cannot be drawn with the results by INV.

\section{Conclusions}
We have analyzed and compared two estimators of sensor images (responses) of 
sources that were separated from a multichannel mixture up to an unknown 
scaling factor: INV and LS. Simulations and perturbation analysis have shown  
pros and cons of the methods, which can be summarized into the following 
recommendations.
\begin{itemize}
	\item LS is more practical in a sense that the whole 
	mixing matrix need not be identified for its use, which is useful 
	especially in underdetermined scenarios.
	\item  The advantage of INV resides in the independence on 
	the (estimated) covariance matrix.
	\item INV could be beneficial as compared to LS when used with 
	non-orthogonal BSS algorithms, i.e., those not applying the orthogonal 
	constraint. However, both the target as well as the interference subspaces 
	must be estimated with a sufficient accuracy.	
\end{itemize}
Both approaches have been shown to be equivalent under the orthogonal 
constraint, so the differences in their accuracies are less significant when 
BSS yields signal components that are (almost) orthogonal (e.g. PCA, ICA, IVA). 
By contrast, the differences between the reconstructed images of nonorthogonal 
components can be large, as demonstrated in the example of 
Section~\ref{IDTsection}.

\section*{Appendix: Asymptotic Expansions}
\subsection*{Computation of (\ref{MDPan1})}
Let ${\bf E}$ contain first $m$ columns of the $d\times d$ identity matrix. It follows that
\begin{align*}
\HH{\bf E}&=\HH_1, & {\bf A}{\bf E}&={\bf A}_1, \\
{\bf E}^H\W&=\W_1, & {\bf E}^H{\bf V}&={\bf V}_1.
\end{align*}
To derive an approximate expression for ${\bf A}$, we will use the first-order expansion
\begin{align}
{\bf A}&={\bf V}^{-1}=(\W+\bXi)^{-1}=({\bf I}+\HH\bXi)^{-1}\HH\\
&\approx ({\bf I}-\HH\bXi)\HH=\HH-\HH\bXi\HH.
\end{align}
Now we apply this approximation and neglect terms of higher than the first order.
\begin{multline}
\left\|\HH_1\W_1-\A_1{\bf V}_1\right\|_F^2=
\left\|\HH_1\W_1-\A{\bf E}{\bf E}^H{\bf V}\right\|_F^2\approx\\
\left\|\HH_1\W_1-(\HH-\HH\bXi\HH){\bf E}{\bf E}^H(\W+\bXi)\right\|_F^2=\\
\left\|\HH_1\W_1-(\HH_1-\HH\bXi\HH_1)(\W_1+\bXi_1)\right\|_F^2\approx\\
\left\|\HH\bXi\HH_1\W_1-\HH_1\bXi_1\right\|_F^2.
\end{multline}
\hfill\rule{1.2ex}{1.2ex}

\subsection*{Computation of (\ref{proposedan1})}
We start with the first approximation
\begin{multline}
\left\|\HH_1\W_1-\widehat{\bf C}{\bf V}_1^H({\bf V}_1\widehat\C{\bf 
V}_1^H)^{-1}{\bf V}_1\right\|_F^2=\\
\Bigl\|\HH_1\W_1-({\bf C}+\Delta\C)(\W_1^H+\bXi_1^H) \\ 
\cdot((\W_1+\bXi_1)({\bf C}+\Delta\C)(\W_1^H+\bXi_1^H))^{-1} 
(\W_1+\bXi_1)\Bigr\|_F^2\approx\\
\Bigl\|\HH_1\W_1-({\bf C}+\Delta\C)(\W_1^H+\bXi_1^H)\cdot(\W_1\C\W_1^H+ \\ 
\W_1\Delta\C\W1^H+\bXi_1\C\W_1^H+\W_1\C\bXi_1^H)^{-1}(\W_1+\bXi_1)\Bigr\|_F^2.
\end{multline}
Since $\W$ is now the exact inverse of $\HH$, it holds that 
$\W_1\C\W_1^H=\C_{\s_1}$. By neglecting higher than the first-order terms and 
by applying the 
first-order expansion of the matrix inverse inside the expression,
\begin{multline}\label{exp3}
\Bigl\|\HH_1\W_1-({\bf C}+\Delta\C)(\W_1^H+\bXi_1^H)({\bf 
I}+\C_{\s_1}^{-1}\bXi_1\C\W_1^H+ \\ 
+\C_{\s_1}^{-1}\W_1\Delta\C\W_1^H+\C_{\s_1}^{-1}\W_1\C\bXi_1^H)^{-1}\C_{\s_1}^{-1}(\W_1+\bXi_1)\Bigr\|_F^2\approx\\
\Bigl\|\HH_1\W_1-(\C\W_1^H+\C\bXi_1^H+\Delta\C\W_1^H)({\bf 
I}-\C_{\s_1}^{-1}\bXi_1\C\W_1^H- \\ 
-\C_{\s_1}^{-1}\W_1\Delta\C\W_1^H-\C_{\s_1}^{-1}\W_1\C\bXi_1^H)\C_{\s_1}^{-1}(\W_1+\bXi_1)\Bigr\|_F^2.
\end{multline}
Since, 
\begin{align}
\C\W_1^H\C_{\s_1}^{-1}&=\HH\,{\tt 
bdiag}(\C_{\s_1},\C_{\s_2})\HH^H\W_1^H\C_{\s_1}^{-1}\nonumber\\ 
&=\HH_1,
\end{align} 
the zero order term in (\ref{exp3}) vanishes. By neglecting higher than the first-order terms, (\ref{proposedan1}) follows.
\hfill\rule{1.2ex}{1.2ex}

\section*{Acknowledgments}
This work was supported by The Czech Science Foundation through Project No.~14-11898S and partly by California Community Foundation through Project No.~DA-15-114599.

We thank BTL Medical Technologies CZ for providing us the three-channel ECG recording.

\begin{biography}[{\includegraphics[width=1in,height=1.25in,clip,keepaspectratio]{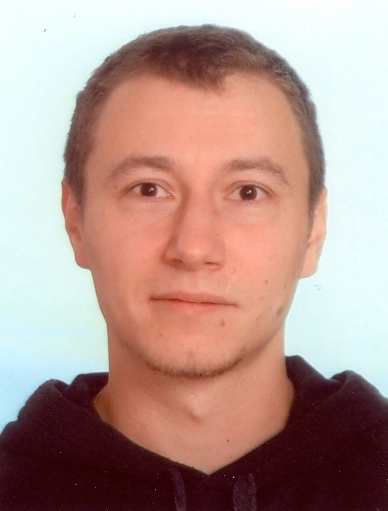}}]{Zbyn\v{e}k
		Koldovsk\'{y}}
	(M'04-SM'15) was born in Jablonec nad Nisou, Czech Republic, in 1979. He 
	received the M.S. degree and Ph.D. degree in mathematical modeling from 
	Faculty of Nuclear Sciences and Physical Engineering at the Czech Technical
	University in Prague in 2002 and 2006, respectively. He was also with the 
	Institute of Information Theory and Automation of the Academy of Sciences 
	of the Czech Republic from 2002 to 2016.
	
	Currently, he is an associate professor at the Institute of Information 
	Technology and Electronics, Technical University of Liberec, and the leader 
	of Acoustic Signal Analysis and Processing (A.S.A.P.) Group. He is the 
	Vice-dean for Science, Research and Doctoral Studies at the Faculty of 
	Mechatronics, Informatics and Interdisciplinary Studies. His main research 
	interests are focused on audio signal processing, blind source separation, 
	independent component analysis, and sparse representations.
	
    Zbyn\v{e}k Koldovsk\'{y} has served as a general
	co-chair of the 12th Conference on Latent Variable Analysis and Signal 
	Separation (LVA/ICA 2015) in Liberec, Czech Republic.
	
\end{biography}

\begin{biography}[
	{\includegraphics[width=1in,height=1.25in,clip,keepaspectratio]{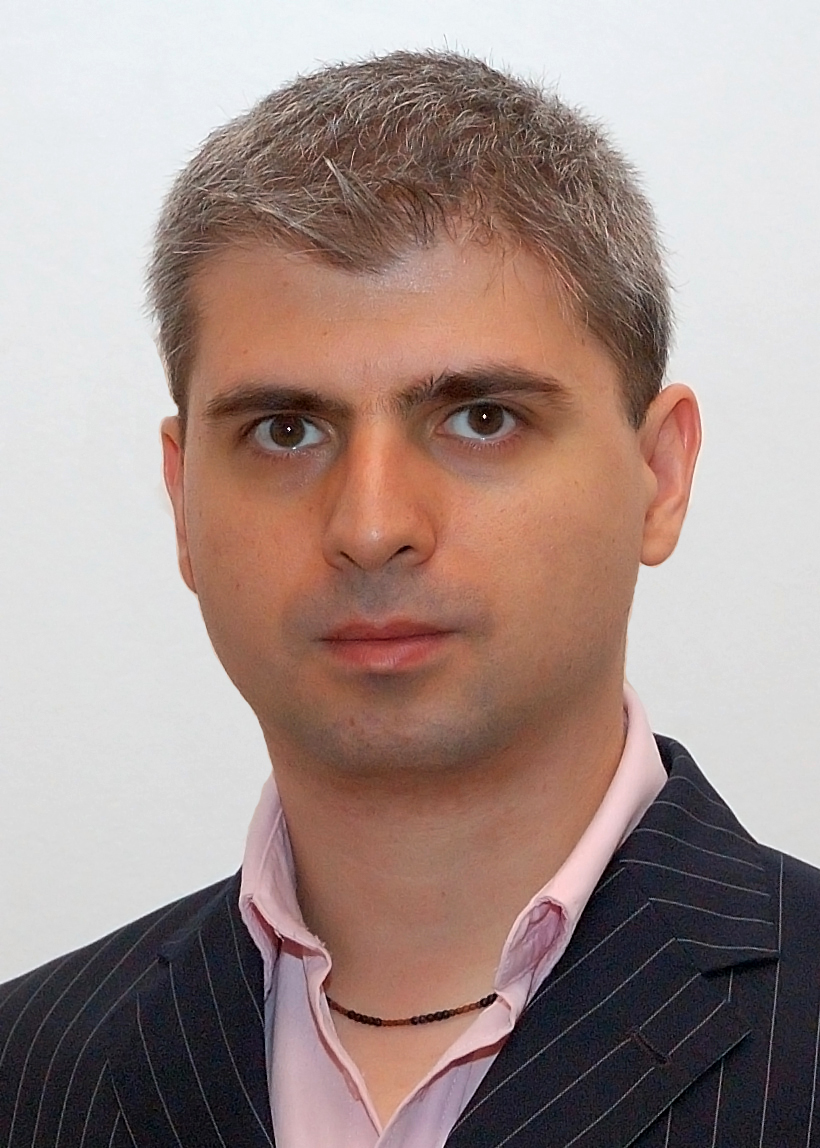}}]
	{Francesco Nesta} received the Laurea degree in computer engineering from
	Politecnico di Bari, Bari, Italy, in September 2005 and the Ph.D. degree
	in information and communication technology from University of Trento, 
	Trento,
	Italy, in April 2010, with research on blind source separation and
	localization in adverse environments.
	
	He has been conducting his research at Bruno Kessler Foundation IRST, Povo 
	di
	Trento, from 2006 to 2012. He was a Visiting Researcher from September 2008 
	to April 2009 with the Center for Signal and Image Processing Department, 
	Georgia Institute of Technology, Atlanta. His major interests include 
	statistical signal processing, blind source separation, speech enhancement, 
	adaptive filtering, acoustic echo cancellation, semi-blind source 
	separation and multiple acoustic source localization. He is currently 
	working at Conexant System, Irvine (CA, USA) on the development of audio 
	enhancement algorithms for far-field applications.
	
	Dr. Nesta serves as a reviewer for several journals such as the {\scshape 
		IEEE
		Transaction on Audio, Speech, and Language Processing}, {\em Elsevier 
		Signal Processing Journal}, {\em Elsevier Computer Speech and 
		Language}, and in several conferences and workshops in the field of 
	acoustic signal processing. He has served as Co-Chair in the third 
	community-based Signal Separation Evaluation Campaign (SiSEC 2011) and 
	as organizer of the 2nd CHIME challenge.
\end{biography}

\end{document}